\documentclass[12pt]{article}

\usepackage[inline,shortlabels]{enumitem}
\usepackage{float}
\usepackage{mathtools}
\usepackage{subdepth}
\usepackage{graphicx}
\usepackage[round]{natbib}
\usepackage{bibunits}
\usepackage[round]{natbib}
\usepackage[margin=1in]{geometry}
\usepackage{tikz}
\usetikzlibrary{arrows.meta}
\usetikzlibrary{decorations.markings}
\usepackage[linesnumbered]{algorithm2e}
\RestyleAlgo{ruled}
\usetikzlibrary{plotmarks}
\usetikzlibrary{shapes,decorations,arrows,calc,arrows.meta,fit,positioning}
\usetikzlibrary{shapes.geometric}
\tikzset{
font={\fontsize{7pt}{12}\selectfont},
every node/.style={scale=1.3},
square/.style={regular polygon,regular polygon sides=4},
    -Latex,auto,node distance =1 cm and 1 cm,semithick,
    state/.style ={ellipse, draw, minimum width = 0.7 cm},
    block/.style = {square, draw, inner sep=0cm,minimum size=8mm},
    free/.style = {circle, draw, inner sep=0cm,minimum size=6mm},
    bidirected/.style={Latex-Latex,dashed},
    el/.style = {inner sep=2pt, align=left, sloped}
}\usepackage[english]{babel}
\usepackage{longtable}
\usepackage{nccmath}
\usepackage{color}
\usepackage{mathtools}
\usepackage{amssymb,amsmath,amsthm}
\usepackage{multirow}
\usepackage[titletoc,title]{appendix}
\usepackage{authblk}
\usepackage{bbm}
\usepackage{setspace}
\usepackage{dsfont}
\usepackage[OT1]{fontenc}
\usepackage{subcaption}
\usepackage{refcount}
\usepackage{booktabs}
\usepackage{array}
\graphicspath{ {../simulations/01_init_manuscript/graphs/} }
\usepackage{xr}
\usepackage[colorlinks,citecolor=blue,urlcolor=blue]{hyperref}

\newtheorem{remark}{Remark}
\newtheorem{proposition}{Proposition}
\newtheorem{theorem}{Theorem}
\AtEndDocument{\refstepcounter{theorem}\label{finalthm}}
\AtEndDocument{\refstepcounter{equation}\label{finaleq}}

{
  \theoremstyle{definition}
  
}
{
  \theoremstyle{definition}
  \newtheorem{assumptioniden}{}
}
{
  \theoremstyle{definition}
  
}

\newtheorem{lemma}{Lemma}
\AtEndDocument{\refstepcounter{lemma}\label{finallemma}}
\newtheorem{coro}{Corollary}
\newtheorem{definition}{Definition}

 \DeclareMathOperator{\var}{\mathsf{Var}}

\renewcommand{\P}{\mathsf{P}}
\newcommand{\p}{t}

\newcommand{\F}{\mathsf{F}}

\newcommand{\U}{\mathsf{U}}

\newcommand{\indep}{\mbox{$\perp\!\!\!\perp$}} 
 \newcommand{\dd}{\,\mathrm{d}}

\newcommand{\D}{\mathsf{D}}

\newcommand{\one}{\mathds{1}}
 
\newcommand{\E}{\mathsf{E}}
\renewcommand{\P}{\mathsf{P}}

\usepackage{accents}

\renewenvironment{proof}{{\it Proof }}{\qed \\}

\DeclarePairedDelimiterX{\norm}[1]{\lVert}{\rVert}{#1}

\pgfdeclarelayer{background}
\pgfsetlayers{background,main}
\usetikzlibrary{arrows,positioning}
\tikzset{
>=stealth',
punkt/.style={
rectangle,
rounded corners,
draw=black, very thick,
text width=6.5em,
minimum height=2em,
text centered},
pil/.style={
->,
thick,
shorten <=2pt,
shorten >=2pt,}
}
\newcommand{\Vertex}[3]
{\node[minimum width=0.6cm,inner sep=0.05cm] (#2) at (#1) {#3};
}
\newcommand{\Vertexr}[3]
{\node[rectangle, draw, minimum width=0.6cm,inner sep=0.05cm] (#2) at (#1) {#2};
}
\newcommand{\ArrowR}[3]%
{ \begin{pgfonlayer}{background}
\draw[->,#3] (#1) to[bend right=30] (#2);
\end{pgfonlayer}
}
\newcommand{\ArrowLW}[3]%
{ \begin{pgfonlayer}{background}
\draw[->,#3] (#1) to[bend left=30] (#2);
\end{pgfonlayer}
}

\newcommand{\ArrowL}[3]%
{ \begin{pgfonlayer}{background}
    \draw[->,#3] (#1) to[bend left=45] (#2);
  \end{pgfonlayer}
}

\newcommand{\EdgeL}[3]%
{ \begin{pgfonlayer}{background}
\draw[dashed,#3] (#1) to[bend right=-45] (#2);
\end{pgfonlayer}
}

\newcommand{\Arrow}[3]%
{ \begin{pgfonlayer}{background}
\draw[->,#3] (#1) -- +(#2);
\end{pgfonlayer}
}

\allowdisplaybreaks
\pgfarrowsdeclare{arcs}{arcs}{...}
{
  \pgfsetdash{}{0pt} 
  \pgfsetroundjoin   
  \pgfsetroundcap    
  \pgfpathmoveto{\pgfpoint{-5pt}{5pt}}
  \pgfpatharc{180}{270}{5pt}
  \pgfpatharc{90}{180}{5pt}
  \pgfusepathqstroke
}
\newcommand{\ArrowB}[3]%
{ \begin{pgfonlayer}{background}
    \draw[|-arcs,line width=0.4mm,shorten <= 0.3cm,shorten >= 0.3cm,#3] (#1) -- +(#2);
  \end{pgfonlayer}
}

\newcommand{\titlepaper}{More efficient transportation of causal effects using mediators}

\date{\today}

\author[1]{Xinyi Zhang\thanks{Corresponding author: \href{mailto:zhang.xinyi@nyu.edu}{zhang.xinyi@nyu.edu}}}
 \author[2]{Kara E. Rudolph}
 \author[3]{James Russell Pike}
 \author[1]{Richard Liu}
 \author[1]{Michele Santacatterina}
 \author[1]{Iv\'an D\'iaz}
 \affil[1]{\footnotesize
 Division of Biostatistics, Department of Population Health, New York University Grossman School of Medicine, New York, New York, USA.}
 \affil[2]{\footnotesize
 Department of Epidemiology, Mailman School of Public Health, Columbia University, New York, New York, USA.}
 \affil[3]{\footnotesize
 Optimal Aging Institute, New York University Grossman School of Medicine, New York, New York, USA.}

\title{\titlepaper}

\begin{document}
\maketitle

\begin{abstract}
We develop flexible semiparametric estimators of the average treatment effect transported to a target population that leverage mediators to achieve identification, and prior knowledge of population heterogeneity or summary-level statistics to achieve improved efficiency. We first propose a one-step semiparametric estimator that assumes knowledge of the baseline covariates driving the heterogeneity across populations. This approach improves efficiency by transporting only through these covariates, rather than through the full covariate set. We then introduce a collaborative one-step semiparametric estimator that also improves efficiency without requiring knowledge of which covariates drive study heterogeneity. Collectively, these estimators advance our ability to transport causal effects by improving efficiency in the presence of mediation and intermediate confounding through targeted dimension reduction strategies. We use simulation to examine finite-sample performance and apply our estimators to a large-scale housing mobility trial (MTO) and a hearing intervention trial (ACHIEVE).
\end{abstract}

\section{Introduction}
Causal effects often differ across populations for reasons that include, but are not limited to, variation across populations in effect-modifying covariates, in treatment uptake or adherence, in the mechanisms generating key mediators, and in the processes determining outcomes. Such variation poses fundamental challenges for the external validity of causal effect estimates \citep{rothwell2005external, pearl2022external, rudolph2022efficiently}. 
To address this issue, researchers have developed statistical estimators that combine information from source and target populations to estimate the transported causal effects, facilitating replication studies and enhancing the generalizability of source population findings \citep{cole2010generalizing, stuart2011use, tipton2013improving, buchanan2018generalizing, dahabreh2019generalizing}. For example, \citet{cole2010generalizing} demonstrated that standardizing the results of an HIV trial to a target population produces transported effects that better reflect population-level impacts, highlighting the value of transport methods in improving the generalizability of the study.\\

The vast majority of methods for transport assume that differences in the distribution of effect modifiers between populations are the only drivers of differences in average treatment effects between populations. However, mediators whose distribution differs across populations may be another important driver for different treatment effects across populations. For example, when treatment influences intermediate variables that vary across populations, such as treatment uptake, transportability requires careful consideration. This issue has been addressed in the context of encouragement-design trials, where treatment is randomized but actual exposure to the treatment may differ from assignment \citep{rudolph2017robust, rudolph2020using}. Previous work developed doubly robust and efficient estimators for transported total, direct, and indirect effects, accounting for post-treatment intermediate variables \citep{rudolph2021transporting, rudolph2022efficiently}. Similarly, \citet{dahabreh2022generalizing} discussed a framework for transporting effects from a randomized trial to a non-randomized study in the presence of treatment nonadherence under the assumptions that there is no effect of trial participation on the outcome and no direct effect of assignment on outcome. More recently, \citet{pmlr-v275-mhasawade25a} examined how different missing data mechanisms in mediators affect transportability. \citet{li2026semiparametric} introduced a data fusion framework for mediation analysis when the mediator and outcome are never jointly observed, leveraging shared instrumental variables to identify natural direct and indirect effects and develop robust estimators. Despite these advances, existing estimators for transported effects that utilize intermediate, mediating variables---either for identification of total effects or because a mediational estimand is of interest---often do not fully exploit opportunities for efficiency gains, particularly in settings where mediators are affected by intermediate confounders that jointly influence downstream outcomes---a challenge that has been emphasized in prior work on mediation analysis \citep{avin2005identifiability, andrews2021insights, diaz2024non, vo2024recanting, liu2024general}. \\

Limited overlap in the distributions of baseline covariates and mediating variables between populations is one factor that can contribute to increased variance in semiparametric transport estimators, motivating the development of more efficient estimation methods \citep{rudolph2025improving}. In this paper, we show that the precision of transported estimates in the presence of mediators can be improved through two forms of covariate dimension reduction, extending methods developed in \citet{rudolph2025improving}. First, we develop a semiparametric estimator tailored to settings in which the variables driving population heterogeneity are known. The key insight is that, for transportability, it suffices to transport a conditional average treatment effect (CATE) that marginalizes all other baseline covariates except those informative for baseline population heterogeneity. In the case where prior knowledge of covariates that drive population heterogeneity is unknown, we propose a collaborative efficient estimator that can further improve efficiency without such information \citep{benkeser2020nonparametric, van2010collaborative, rudolph2025improving}. This method draws on the following important ideas: (1) the fundamental role of the propensity score---in this case, the population selection score \citep{rosenbaum1983central, tipton2013improving}, (2) the probability of being in the source population is conditional on a dimension reduction summary statistics instead of all covariates, and (3) the outcome model and selection score act collaboratively on estimation. This estimator avoids requiring knowledge of the variables driving population heterogeneity, at the price of not being doubly robust by default, and relies on a consistent estimate of the target population CATE.\\

The estimation strategy of our first proposed estimator relies on correcting the first-order bias of plug-in machine learning estimators using the \textit{efficient influence function} (EIF) \citep{vanderVaart98}. The EIF plays a central role in semiparametric efficiency theory because its variance defines the semiparametric efficiency bound \citep{Bickel97}, and it characterizes the leading term in the von Mises expansion of a plug-in estimator's error \citep{mises1947asymptotic}, therefore having an estimation error that is second-order and may be negligible. Our first contribution is to obtain the EIF and the corresponding bias-corrected estimator under the assumption that a known transformation of covariates varies in distribution across the source and target population. Our second contribution is to propose another estimation strategy when such transformation is unknown. Inspired by \citet{benkeser2020nonparametric}, our estimation strategy relies on an alternative von Mises expansion that does not lead to the EIF but nonetheless yields estimators with second-order error that may be negligible under some assumptions. Our proposed estimators improve previous work by improving efficiency in estimating transported causal effects when mediators and intermediate confounders are present. \\

The remainder of the paper is organized as follows. In Section \ref{sec:causal_model}, we introduce the structural causal model and discuss identification strategies. Section \ref{sec:eff_theory} presents the efficiency theory underlying our proposed estimators. In Section \ref{sec:estimator}, we provide details on estimation and discuss the robustness properties of the proposed estimators. Section \ref{sec:simulation} presents simulation studies, and Section \ref{sec:application} demonstrates the application of our methods using data from the Moving to Opportunity Study (MTO)---a randomized trial where families living in public housing in five U.S. cities could receive a Section 8 voucher for subsidized rent on the private market. Our goal is to assess the impact of randomized assignment to receive a Section 8 housing voucher or not on the risk of developing a mood disorder at the 10--15 year follow-up among participants' sons, and to transport the estimated effects from New York City and Chicago to Los Angeles. In addition, we apply our proposed methods to the Aging and Cognitive Health Evaluation in Elders (ACHIEVE) study, a randomized trial evaluating the effect of a hearing intervention on cognitive outcomes among older adults. We assess transporting the treatment effect from the Atherosclerosis Risk in Communities (ARIC) subgroup of ACHIEVE, consisting of participants recruited from the ARIC cohort, to the de novo subgroup of ACHIEVE, consisting of volunteers newly recruited for the ACHIEVE study.

\section{Causal model and identification}
\label{sec:causal_model}

\subsection{Preliminaries on general transportability}
In the absence of mediators, the observed data for transportability \citep{pearl2015generalizing} take the form $O = (S, W, A, S \times Y)$. Here, $S$ is a population indicator, with $ S = 1 $ denoting the source population (where outcomes are observed) and $ S = 0 $ denoting the target population (where outcomes are unobserved). $ W $ represents a vector of baseline covariates. $ A $ denotes the binary treatment or exposure. $ Y $ is the outcome observed when $S = 1$. Let $Y(a)$ denote the counterfactual outcome corresponding to the setting where $A = a$. Additional notation will be introduced later, but Table \ref{tab:notation} provides a summary of notation in this paper.\\

\begin{table}[ht]
\centering
\caption{Notation\label{tab:notation}}
{\scriptsize  
\begin{tabular}{ll}
\toprule
\textbf{Symbol} & \textbf{Definition} \\
\midrule
$S$ & Population indicator (1 = source, 0 = target) \\
$W$ & Baseline covariates \\
$A$ & Treatment indicator \\
$Y$ & Outcome \\
$M$ & Mediator \\
$Z$ & Intermediate confounder \\
$X$ & Collection of $(M,Z)$ \\
$O$ & Observed data: $(S, W, A, M, Z, S \times Y)$ \\
$U$ & Unmeasured exogenous errors \\
$\P$ & Distribution of observed data $O$ \\
$V$ & Partial study heterogeneity \\
$h(a,w)$ & Mean outcome in the target population given $A = a$ and $W = w$: $\E( Y \mid A = a, W = w, S = 0)$ \\
$g(W)$ & Mean outcome in the target population given $A = 0$:  $h(0,W)$ \\
$f(W)$ & CATE in the target population: $h(1,W) - h(0,W)$ \\

$m(A,W,X)$ & Outcome model in the source population: $\E(Y\mid A, W, X, S=1)$  \\

$e_A(W)$ & Propensity score for being treated in target population: $\P(A=1\mid W, S=0)$ \\

$e_S(W)$ & Selection score for being in the source population: $\P(S = 1 \mid W)$ \\

$q_S(A,W,X)$ & Selection score for being in source population given all variables: $\P(S = 1 \mid A, W, X)$ \\

$r_S(A,W,X)$ & Selection score for being in source population conditional on the outcome model: $\P(S=1 \mid m(A, W, X))$\\

$\p$ & Proportion in target population: $\P(S = 0)$ \\

$p_X(a, w)$ & Probability that $X = x$ in the target population conditional on $A = a$ and $W = w$: $\P(X = x \mid A = a, W = w, S = 0)$ \\

$l(V)$ & Conditional expectation of CATE on $V$: $\E\{f(W)\mid V\}$ \\

$k(W)$ & Conditional expectation of CATE on selection score of being in the source population: $\E\{f(W)\mid e_S(W)\}$ \\

$\theta$ & Transported ATE under assumptions \ref{ass:A3} and \ref{ass:positivity} \\
$\lambda$ & Transported ATE under assumptions \ref{ass:A3}, \ref{ass:positivity}, and \ref{ass:modif} \\
$\tau$ & Transported ATE under assumptions \ref{ass:A3} and \ref{ass:positivity}, expressed through the selection score: $\E[\E\{f(W) \mid e_S(W)\} \mid S=0]$ \\
$\tilde\theta$ & One-step estimator for $\theta$, based on the EIF developed in \citet{rudolph2017robust} \\
$\tilde\lambda$ & One-step estimator for $\lambda$ \\
$\tilde\tau$ & Collaborative one-step estimator for $\tau$ \\

$\D_\theta(O; \xi)$ & EIF for $\theta$ \\
$\D_\lambda(O; \eta)$ & EIF for $\lambda$ \\
$\U_\tau(O; \gamma)$ & Estimating function for $\tau$\\
$\D_{h,a}(O;\kappa)$ & Doubly robust pseudo-outcome for $h(a,W)$, $a\in\{0,1\}$ \\
$\xi$ & Nuisance parameters for $\D_\theta(O; \xi)$ \\
$\eta$ & Nuisance parameters for $\D_\lambda(O; \eta)$ \\
$\gamma$ & Nuisance parameters for $\U_\tau(O; \gamma)$ \\
$\kappa$ & Nuisance parameters for $\D_{h_{a}}(O; \kappa)$\\
\bottomrule
\end{tabular}
}
\end{table}

Two typical assumptions invoked by transport methods are the exchangeability of the outcome mechanism between the source and the target population, along with the assumption that there are no unmeasured treatment-outcome confounders in the target population---that usually would hold by design if the target population is randomized:

\begin{assumptioniden}[No selection or confounding bias conditional on baseline variables - observed-outcome version]\label{ass:trans1}
Assume that
\begin{enumerate}[label=(\roman*)]
    \item (S - admissibility). $\E(Y\mid A, W, S=1) =\E(Y\mid A, W, S=0) $, and that\label{ass:trans1_exch}
    \item  (Conditional exchangeability on $S = 0$). $Y(a)\indep A\mid W, S=0$.\label{ass:trans1onf}
\end{enumerate}
\end{assumptioniden}
Under \ref{ass:trans1} and appropriate positivity conditions listed in \citet{rudolph2025improving}, the transported average treatment effect $\E[Y(1) - Y(0)\mid S=0]$ can be identified as $\E\{\E(Y\mid A=1, W, S=1) - \E(Y\mid A=0, W, S=1)\mid S=0\}.$

\begin{remark}
    In cases where the source but not the target population is randomized, a more commonly known identification result states that the above parameter identifies the transported effect under the following assumption.
\begin{assumptioniden}[No selection or confounding bias conditional on baseline variables - counterfactual version]\label{ass:trans2} Assume that
\begin{enumerate}[label=(\roman*)]
          \item (S - admissibility). $\E[Y(a)\mid W, S=0]=\E[Y(a)\mid W, S=1]$, and that\label{ass:trans2_exch}
    \item  (Conditional exchangeability on $S = 1$). $Y(a)\indep A\mid W, S=1$.\label{ass:trans2onf}
\end{enumerate}
\end{assumptioniden}
If both studies have measured outcomes, \ref{ass:trans1}\ref{ass:trans1_exch} is testable, whereas  \ref{ass:trans2}\ref{ass:trans2_exch} is not. When both studies are randomized, \ref{ass:trans1} and \ref{ass:trans2} are equivalent.
\end{remark}

While widely used for general transported causal effect estimation, these assumptions may fail when the population indicator $S$ affects intermediate variables on the pathway from $A$ to $Y$, such as treatment adherence. In the next section, we present a structural causal model for transportability under mediator $M$ and intermediate confounder $Z$, along with the required identification assumptions.

\subsection{Structural causal model for transportability under mediation and intermediate confounding}

Assumptions \ref{ass:trans1}\ref{ass:trans1_exch} and \ref{ass:trans2}\ref{ass:trans2_exch} may be violated when $S$ affects variables in the causal pathway from $A$ to $Y$, even if there is no direct effect of $S$ on $Y$. This situation is depicted in the directed acyclic graph (DAG) in Figure~\ref{fig:dag}. 
\begin{center}
\begin{minipage}{0.45\textwidth}
\small
\begin{figure}[H]
  \centering
 \begin{tikzpicture}[node distance =0.85 cm and 0.85 cm]
    \node (a) [free]{$A$}; 
    \node (w) [left = 10mm of a, free]{$W$};
    \node (m) [right = 10mm of a, free]{$M$};
    \node (y) [below right= 5mm and 3mm of a, free]{$Y$};   
    \node (s) [above right = 12mm and 3mm of a, free]{$S$};   
    \path (w) edge node[above, el] {} (a);
    \path (a) edge node[above, el] {} (m);
    \path (m) edge node[above, el] {} (y);
    \path (w) edge[out=45, in=135] node[above, el] {} (m);    
    \path (w) edge[out=-90,in=180] node[above, el] {} (y);
    \path (s) edge[out=180,in=90] node[above, el] {} (w);
    \path (s) edge[out=200,in=90] node[above, el] {} (a);
    \path (a) edge node[above, el] {} (y);
    \path (s) edge[out=-20,in=90] node[above, el] {} (m);
\end{tikzpicture}
  \caption{Causal selection diagram under mediation.}
  \label{fig:dag}  
\end{figure}
\end{minipage}%
\hspace{6mm}
\begin{minipage}{0.45\textwidth}
\small
 \begin{figure}[H]
  \centering
 \begin{tikzpicture}[node distance =0.85 cm and 0.85 cm]
    \node (a) [free]{$A$}; 
    \node (w) [left = 10mm of a, free]{$W$};
    \node (m) [right = 10mm of a, free]{$M$};
    \node (y) [below right= 5mm and 3mm of a, free]{$Y$};   
    \node (z) [right = 10mm of m, free]{$Z$};
    \node (s) [above right = 12mm and 3mm of a, free]{$S$};   

    \path (w) edge node[above, el] {} (a);
    \path (w) edge[out=45,in=135] node[above, el] {} (z);
    \path (a) edge node[above, el] {} (m);
    \path (m) edge node[above, el] {} (y);
    \path (a) edge[out=45,in=145] node[above, el] {} (z);
    \path (z) edge node[above, el] {} (m);
    \path (z) edge[out=-90,in=0] node[above, el] {} (y);
    \path (w) edge[out=45, in=135] node[above, el] {} (m);    
    \path (w) edge[out=-90,in=180] node[above, el] {} (y);
    \path (s) edge[out=180,in=90] node[above, el] {} (w);
    \path (s) edge[out=200,in=90] node[above, el] {} (a);
    \path (a) edge node[above, el] {} (y);
    \path (s) edge[out=-20,in=90] node[above, el] {} (m);

\end{tikzpicture}
  \caption{\small Causal selection diagram under mediation and intermediate confounding.}
  \label{fig:dag_ic}  
\end{figure}
\end{minipage}%
\end{center}

Under this model, transportability is possible when we have mean-exchangeability of the outcome, adjusting for baseline covariates and the mediator \citep{rudolph2017robust}:
\begin{equation}
    \E(Y\mid A, W, M, S=1) = \E(Y\mid A, W, M, S=0).\label{eq:ass}
\end{equation} 

However, in the presence of a confounder between $M$ and $Y$, such as the variable $Z$ depicted in Figure~\ref{fig:dag_ic}, transporting effects using intermediate variables requires additional care, even in a case where the intermediate variable $Z$ is not affected by the study indicator $S$. Specifically, (\ref{eq:ass}) fails to hold in the model depicted in Figure~\ref{fig:dag_ic}. This is because only conditional on $M$, which is a collider between $S$ and $Z$, can induce spurious associations between these two variables and open a backdoor path $S \to Z\to Y$ which would otherwise be closed. \\

To account for such an intermediate confounder $Z$, we assume that the observed data is $O = (S,W,A,M,Z,S \times Y)$ and assume that $(O_{1},...,O_{n})$ represents a sample of i.i.d. observations of $O$. We assume the following structural model: 
$S = f_S(U_S); W = f_W(S, U_W); A = f_A(S, W, U_A); Z = f_Z(A,W,U_Z); M = f_M(A,W,S,Z,U_M); Y = f_Y(A,W,M,Z,U_Y)$, corresponding to the DAG in Figure~\ref{fig:dag_ic}. Under this model, each variable is assumed to be generated as a deterministic, unknown function of exogenous error terms 
$(U_S, U_W, U_A, U_Z, U_M, U_Y) $ and previously observed endogenous variables. This framework follows the structural causal model formalism introduced by \citet{pearl1995causal}. \\

We introduce the following identification assumptions: 
\begin{assumptioniden}[No selection or confounding bias conditional on baseline and intermediate variables - observed-outcome version]\label{ass:A3} Assume that
\leavevmode
\begin{enumerate}[label=(\roman*)]
    \item (S-admissibility). $ \E(Y \mid A, W, Z, M, S=1) = \E(Y \mid A, W, Z, M, S=0) $\label{ass:A3_S}, and that
    \item (Conditional exchangeability on $S = 0$). $ Y(a) \indep A \mid W, S=0.$ \label{ass:A3_exch}
\end{enumerate}
\end{assumptioniden}

Additionally, to have the transported treatment effects well-defined and identifiable, we introduce the following positivity assumptions:
\begin{assumptioniden}[Positivity of study and treatment mechanisms]\label{ass:positivity} Assume that
\leavevmode
\begin{enumerate}[label=(\roman*)]
    \item $\P(W=w\mid S=0)>0$ implies $\P(A=a\mid W=w, S=0)>0$, and that \label{ass:pos_A}
    \item  $\P(M = m, Z = z \mid A = a, W = w, S = 0) \P(W = w \mid S = 0) > 0$ implies $\P(M = m, Z = z, A = a, W = w, S = 1) > 0.$  \label{ass:pos_X}
\end{enumerate}
\end{assumptioniden}




After introducing the intermediate variables $M$ and $Z$, our new assumption \ref{ass:A3} is a variation of \ref{ass:trans1}. In addition, the identification results below would be correct if we were to assume a variation of \ref{ass:trans2}.

\begin{assumptioniden}[No selection or confounding bias conditional on baseline and intermediate variables - counterfactual version]\label{ass:A5} Assume that
\leavevmode
\begin{enumerate}[label=(\roman*)]
    \item (S-admissibility). $ \E(Y(a) \mid W, M(a), Z(a), S=1) = \E(Y(a) \mid W, M(a), Z(a), S=0) $, and that
    \label{ass:A5_S}
    \item (Conditional exchangeability). 
     $Y(a)\perp\!\!\!\perp A \mid W, M(a), Z(a), S=1$ and $(M(a), Z(a))\perp\!\!\!\perp A \mid W, S=0$.
     \label{ass:A5_exch}
\end{enumerate}
\end{assumptioniden}Note that even though the conditional exchangeability for the outcome $Y(a)$ is assumed conditional on $S = 1$, we do need to assume the conditional exchangeability for $(M(a), Z(a))$ on $S = 0$. Therefore, \ref{ass:A5} may not be practically as useful as \ref{ass:A3}, as \ref{ass:A5} imposes a type of conditional exchangeability assumption in both populations, whereas \ref{ass:A3} requires conditional exchangeability only on the target. That said, the identification results we present next hold under either set of assumptions. \\

For identification under assumption \ref{ass:A3} \& \ref{ass:positivity}, we begin by defining the following parameter:
\begin{equation*}
    h(a,w)=\int \E(Y\mid A=a,  W=w, Z=z, M=m, S=1)\dd\P(m,z\mid A=a, W=w, S=0).
\end{equation*} The quantity $h(a,w)$ represents the expected outcome for individuals with covariates $W=w$ under treatment $A=a$, where the outcome model is learned from the source population $(S = 1)$, but the distribution of the intermediate variables $X = (M, Z)$ is learned from the target population $(S = 0)$. It will be convenient in our developments to re-parameterize this function as $h(a,w)=a\times f(w) + g(w)$ where $f(w)=h(1,w)-h(0,w)$ and $g(w)=h(0,w)$ \citep{chernozhukov2018double, hahn2020bayesian, nie2021quasi}. As shown in the first section of Appendix \ref{subsec:identification}, under assumptions \ref{ass:A3} \& \ref{ass:positivity}, the quantity $h(a,w)$ simplifies to $\E[Y\mid A=a, W=w, S=0]$ and $f(W)$ is the CATE in the target population. \citet{rudolph2017robust} showed that the transported treatment effect can be identified as
\[\E[Y(1) - Y(0)\mid S=0] = \E[f(W)\mid S=0] =: \theta.\] Assuming \ref{ass:A5} instead of \ref{ass:A3} would lead to the same identification results for, as shown in Appendix \ref{subsec:identificationA5}. The EIF for $\theta$ in the non-parametric model was given by  \citet{rudolph2017robust} as:
{\small
\begin{align*}
    \D_\theta(O;\P) = \frac{1}{\P(S=0)}\bigg[&\frac{\one\{S=1\}(2A-1)}{\P(A\mid W,S=0)}\frac{\P(S=0\mid A, W, X)}{\P(S=1\mid A, W, X)}\times\{Y-\E(Y\mid A, W, X, S=1)\}\\
    &+\frac{\one\{S=0\}(2A-1)}{\P(A\mid W,S=0)}\times\{\E(Y\mid A, W, X, S=1)-h(A,W)\}\\
               &+\one\{S=0\}[f(W)-\theta]\bigg],
\end{align*}}%
where we have simplified the notation by defining $X=(M, Z)$. Note that $\D_\theta$ depends on $\P$ only through a vector of nuisance parameters $\xi(\P)$; therefore, we will sometimes use the notation $\D_\theta(O;\xi)$. This EIF characterizes the construction of a one-step estimator by correcting the first-order bias from the plug-in estimator, where the plug-in estimator is denoted as \[\hat\theta = \frac{1}{n\times \hat\P(S = 0)}\sum_{i=1}^n\one\{S_i=0\}\hat f(W_i).\]
Thus, the one-step estimator \citep{kennedy2024semiparametric} that follows from the EIF can be expressed as $\tilde{\theta} = \hat{\theta} + \frac{1}{n} \sum_{i=1}^{n} \D_{\theta}(O_i; \hat{\xi})$. We consider the existing estimator $\tilde{\theta}$ as our baseline estimator throughout the paper.

\subsection{Leveraging known study heterogeneity for improved efficiency}
\label{subsec:study}

Estimators that use a subset of effect modifiers required for transport to obtain efficiency gains without mediators were developed in \citet{rudolph2025improving}. In this section, we focus on variables driving between-population heterogeneity, regardless of whether they are known effect modifiers or not. Specifically, our approach uses the fact that only a transformation of baseline covariates that captures between-study heterogeneity is needed for transporting causal effects. We formally state our main assumption as: 

\begin{assumptioniden}[Known partial study heterogeneity]\label{ass:modif}
    Assume there is a known transformation $V=V(W)$ such that $S\indep W\mid V$.
\end{assumptioniden}

This assumption states that there exists a transformation $V(W)$  (e.g., a subset) that is sufficient to explain differences across study populations. Importantly, this assumption is testable, and such a transformation can, in principle, be learned from data. Next, we introduce an identification parameter $\lambda$ that leverages this assumption to improve efficiency in transporting treatment effects using mediators and intermediate confounders.  
\begin{proposition}\label{prop:identification}
    Assume \ref{ass:A3}, \ref{ass:positivity}, and \ref{ass:modif}, the average treatment effect in the target population $\E[Y(1)-Y(0)\mid S=0]$ is identified as
    \[\lambda := \E\{\E[f(W)\mid V]\mid S=0\}.\]
\end{proposition}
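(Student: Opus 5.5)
The argument has two steps. First we use the identification result already stated: under \ref{ass:A3} and \ref{ass:positivity}, $\E[Y(1)-Y(0)\mid S=0]=\theta=\E[f(W)\mid S=0]$. It then suffices to show $\theta=\lambda$ under \ref{ass:modif}. This is a purely observed-data identity and involves no further causal reasoning.

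To get it, we condition on $V$ inside the target-population expectation. Since $V=V(W)$ is a measurable function of $W$, the tower property gives
\[
\E[f(W)\mid S=0]=\E\{\E[f(W)\mid V,S=0]\mid S=0\}.
\]
Next we use \ref{ass:modif}. The condition $S\indep W\mid V$ means the conditional law of $W$ given $V$ and $S=0$ equals the conditional law of $W$ given $V$ alone. Hence, for any integrable function of $W$, and in particular for $f$,
\[
\E[f(W)\mid V,S=0]=\E[f(W)\mid V]\quad\text{a.s. on }\{S=0\}.
\]
Substituting this into the tower expression yields $\lambda=\E\{\E[f(W)\mid V]\mid S=0\}$, which proves the claim.

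The one subtle point is that $f$ is itself defined through target-population quantities, namely $\P(m,z\mid a,w,S=0)$, $\P(A\mid W,S=0)$, and $h$. Still, $f$ is a fixed deterministic function of $w$, since it is defined pointwise. The conditional independence therefore applies to $f(W)$ as a function of $W$ alone, and the definition of $f$ never interferes with swapping the conditional expectations.

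We also need $\E[f(W)\mid V]$ to be well defined over the support of $V$ given $S=0$. Under \ref{ass:positivity}, $f$ is well defined on the support of $W$ given $S=0$. Moreover, \ref{ass:modif} implies that, given $V$, $W$ has the same support in both populations. So averaging $f$ over the marginal conditional law of $W$ given $V$ only involves points where $f$ is defined. One can restrict to integrable $f$, for example under a bounded outcome, so that all the conditional expectations exist.
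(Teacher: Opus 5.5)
Your proof is correct and follows the paper's argument. The paper likewise reduces the target to $\E[f(W)\mid S=0]$, applies the tower property over $V$, and then uses $S\indep W\mid V$ to replace $\E[f(W)\mid V,S=0]$ with $\E[f(W)\mid V]$. The only difference is that the paper re-derives the first step in the appendix rather than citing the earlier identification of $\theta$; that derivation uses iterated expectations, conditional exchangeability with positivity, consistency, and S-admissibility over $(M,Z)$. Your remarks on the well-definedness of $\E[f(W)\mid V]$ are a harmless refinement that the paper omits.
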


 We include the proof in Appendix \ref{subsec:identification}. Before proceeding to the development of estimators for $\lambda$, we would like to introduce an alternative identification result that will allow us to develop estimators with efficiency gains in situations where the transformation $V$ is unknown or \ref{ass:modif} cannot be safely assumed. 

 \subsection{Leveraging the fundamental role of selection score for improved efficiency} \label{subsec:selection_score}

In this section, we extend the collaborative estimator in  \citet{rudolph2025improving} to the setting with intermediate variables, again leveraging the 
fundamental role of the propensity score---in this case, the selection score. This approach enables efficiency gains without requiring explicit modeling of partial study heterogeneity or reliance on Assumption~\ref{ass:modif}.\\

We let $e_S(W) = \P(S = 1 \mid W)$ denote the probability of being in the source population conditional on covariates. This quantity has also been referred to as \textit{selection score} in prior literature \citep{tipton2013improving}. As in \citet{rudolph2025improving}, we can re-parameterize $\theta$ as a function of the selection score $e_S(W)$ given $S \indep W \mid e_S(W)$ \citep{rosenbaum1983central}:
\begin{equation}
    \E\left[f(W) | S = 0\right] = \E\left[\E\left[f(W) | e_S(W), S = 0 \right] | S = 0\right] = \E\left[\E\left[f(W) | e_S(W) \right] | S = 0\right] =: \tau \label{eq:theta_transform}.
\end{equation}
 As shown in Equation \ref{eq:theta_transform}, the transformation of $\theta$ under the fundamental role of the selection score allows us to have $\tau$ as our new estimand for transported causal effects. Intuitively, we can regard $e_S(W)$ as an unknown form of $V$ described in the previous Section~\ref{subsec:study}. Note that $e_S(W)$ provides a sufficient dimension reduction of the covariates $W$ for transport. Thus, once $f(W)$ has been estimated, only its relationship with the one-dimensional score $e_S(W)$ must be learned in order to transport the treatment effect. In Section~\ref{subsec:eff_tau}, we show how the selection score can be further leveraged to improve efficiency through its collaboration with the outcome regression.

\section{Preliminary theory for estimation}
\label{sec:eff_theory}

Our estimation focuses on obtaining the one-step estimators by correcting the first-order bias of the plug-in estimators, where such methods have been recently adapted in the literature such as targeted learning \citep{vanderLaanRose11, vanderLaanRose18} and double machine learning \citep{chernozhukov2018double, kennedy2024semiparametric}. This strategy relies on characterizing the first-order bias of the plug-in estimator through the von Mises expansion \citep{mises1947asymptotic} and is defined as follows: 

\begin{definition}[First-order von Mises expansion]
Suppose $\psi: \mathcal{P} \to \mathbb{R}$ is a functional that is smooth and maps from the nonparametric model to the real line. The functional satisfies the \textit{first-order von Mises expansion} if the following holds
\begin{align*}
\psi(\F) - \psi(\P)
= - \E_\P\!\left[ \varphi(O; \F) \right] + R_2(\F,\P) 
\end{align*}
for any distributions $\F, \P \in \mathcal{P}$, where  $\varphi(O; \F)$ is a mean zero, finite-variance function satisfying  $\E_\F\!\left[\varphi(O; \F)\right] = 0$ and $\E_\F\!\left[\varphi^2(O; \F)\right] < \infty$ and $R_2(\F,\P)$ is a second-order term given only by products or squares of differences between 
 $\F$ and $\P$. The function $\varphi$ is referred to as a
gradient of the first-order expansion.
\end{definition}

Given the true data distribution $\P$, let $\hat{\P}$ be a data distribution estimated from the distribution of the observed data. The von Mises expansion defined above allows us to characterize the bias of the plug-in estimator $\psi(\hat{\P})$, which is given by
\begin{align*}
    \psi(\hat{\P}) - \psi(\P)
= - \E_\P\![ \varphi(O; \hat{\P})] + R_2(\hat{\P},\P).
\end{align*}

If we have a sufficiently accurate estimate of $\P$, the second-order term $R_2(\hat{\P}, \P)$ is small and therefore negligible. In this case, the bias of the plug-in estimator is dominated by the term $\E_{\P}[ \varphi(O; \hat{\P})]$. This term can be estimated by the empirical mean of $\varphi(O_i; \hat{\P})$ and added back to the plug-in estimator to correct its bias. Consequently, the one-step estimator is given by $\psi(\hat{\P}) + \frac{1}{n}\sum_i\varphi(O_i; \hat{\P}).$ We note that when $\varphi$ is the canonical gradient of $\psi$, it corresponds to the EIF, which achieves the minimum variance among regular influence functions. This decomposition motivates bias-corrected and one-step estimators that remove the first-order error of the plug-in estimator and attain semiparametric efficiency under appropriate regularity conditions. 

\subsection{First-order bias of plug-in estimators for $\lambda$ and efficiency considerations}

To develop an estimator for $\lambda$, we introduce the following notation: 
\begin{align*}
    m(a,w,x)&=\E(Y\mid A=a, W=w, X=x, S=1)\\
    e_A(w)&=\P(A=1\mid W=w, S=0)\\
    e_S(v)&=\P(S=1\mid V=v)\\
    q_S(a,w,x)&=\P(S=1\mid A=a, W=w, X=x)\\
    \p &=\P(S=0)\\
    l(v)&=\E\{f(W)\mid V=v\}.
\end{align*} 

With this notation, we can express the target estimand $\lambda = \E\{\E[f(W)\mid V]\mid S=0\}$ as $\lambda = \E[l(V) \mid S = 0 ]$. The following theorem gives the EIF for $\lambda$.

\begin{theorem} \label{theorem:EIF_lambda}
Let $\eta = (m, f, g, e_A, e_S, q_S, \p, l)$ denote the collection of relevant nuisance functions. Under Assumption~\ref{ass:modif}, the EIF for $\lambda$, denoted by $\D_\lambda(O;\eta)$, is given by
{\small
\begin{align*}
    \D_\lambda(O;\eta) = \frac{1}{\p}\bigg[&\frac{S\{A-e_A(W)\}}{e_A(W)\{1-e_A(W)\}}\frac{1-q_S(A, W, X)}{q_S(A, W, X)}\times\{Y-m(A, W, X)\}\\
    &+\frac{(1-S)\{A-e_A(W)\}}{e_A(W)\{1-e_A(W)\}}\times\{m(A,W,X)- h(A,W)\}\\
    &+\{1-e_S(V)\}\times \{f(W) - l(V)\}\\
               &+(1-S)\{l(V)-\lambda\}\bigg].
\end{align*}}
\end{theorem}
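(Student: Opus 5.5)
We will derive $\D_\lambda$ with the Gateaux-derivative (point-mass contamination) approach in the nonparametric model for $O$. We then account for Assumption~\ref{ass:modif} by noting that $\lambda$ depends on $\P$ only through a few distributions. Write
\[
\lambda=\frac{\E[(1-S)\,l(V)]}{\p}, \qquad l(v)=\E\{f(W)\mid V=v\},
\]
so that
\[
\lambda=\frac{1}{\p}\int \{1-e_S(v)\}\,l(v)\dd\P(v).
\]
Along a regular parametric submodel $\P_\epsilon$ with score $s(O)$, the pathwise derivative splits by the chain rule into four pieces:
\begin{enumerate*}[(a)]
\item the derivative of $1/\p$;
\item the derivative of the marginal law of $(S,V)$ with $l$ held fixed;
\item the derivative of the conditional law of $W$ given $V$, entering through $l$, with $f$ held fixed;
\item the derivative of $f$ itself, through the source outcome regression $m$ and the target laws of $X$ given $(A,W)$ and of $A$ given $W$.
\end{enumerate*}

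The first three pieces are routine. Pieces (a) and (b) together give $\p^{-1}(1-S)\{l(V)-\lambda\}$, the fourth term of the statement. Piece (c) gives $\p^{-1}\{1-e_S(V)\}\{f(W)-l(V)\}$. The reason is that $l(V)$ is a regression of $f(W)$ on $V$ in the pooled population, weighted by the density $\{1-e_S(v)\}\dd\P(v)$. Its influence is therefore the residual $f(W)-l(V)$ times the weight $\{1-e_S(V)\}/\p$, evaluated for every observation regardless of $S$. Here Assumption~\ref{ass:modif} is used: because $S\indep W\mid V$, the law of $W$ given $V$ is common to both populations, so $l$ may be computed in the pooled sample. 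This is what produces the third term in place of the $(1-S)\{f(W)-\theta\}$ term of $\D_\theta$.

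Piece (d) is the main work. The functional in question is
\[
\frac{1}{\p}\E\big[\{1-e_S(V)\}\,\delta f(W)\big].
\]
By iterated expectations this equals $\p^{-1}\E[\mathbb{E}\{(1-S)\mid W\}\,\delta f(W)]$, which is the derivative of $\E[(1-S)f(W)]/\p$ with the marginal law of $W$ held fixed. This is exactly the perturbation handled in the EIF of $\theta$ in \citet{rudolph2017robust}. We reuse that computation. Perturbing $m$ gives the weighted residual term
\[
\frac{S\,(2A-1)}{\P(A\mid W,S=0)}\,\frac{1-q_S}{q_S}\,\{Y-m\},
\]
using the density-ratio identity
\[
\frac{\dd\P(x\mid a,w,S=0)}{\dd\P(x\mid a,w,S=1)}
=\frac{1-q_S(a,w,x)}{q_S(a,w,x)}\,\frac{\P(S=1\mid a,w)}{\P(S=0\mid a,w)}.
\]
The factors $\P(S=1\mid a,w)$ and $\P(S=0\mid a,w)$ are cancelled against $\P(A\mid W,S=0)$ and the marginal odds given $W$. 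Perturbing $p_X$ gives $(1-S)(2A-1)\{m-h(A,W)\}/\P(A\mid W,S=0)$. Finally, $(2A-1)/\P(A\mid W,S=0)$ is rewritten as $\{A-e_A\}/[e_A(1-e_A)]$.

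The hardest bookkeeping is verifying that no additional density ratio $\P(S=0\mid W)/\P(S=0\mid V)$ survives in these two terms. With the approach above, one has $\E[(1-S)\mid W]=1-e_S(V)$ under Assumption~\ref{ass:modif}, so the weight is exactly that of $\theta$. Summing the four pieces yields the displayed $\D_\lambda$.

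It remains to address efficiency. Assumption~\ref{ass:modif} restricts the model, so we must check that the derived gradient lies in its tangent space. The tangent space is the nonparametric one minus scores of $S$ given $W$ that are not functions of $V$. The derived gradient has mean zero and its $S$-given-$W$ component depends on $W$ only through $V$. Hence it equals its projection and is the EIF. We verify mean zero directly by iterated expectations. As an alternative check, we would confirm the result via a von Mises expansion showing that the remainder is second order.
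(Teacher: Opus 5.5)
Your proof is correct, but it takes a different route from the paper. The paper never derives $\D_\lambda$ constructively. Its only supporting argument is the von Mises expansion in the appendix (the proof of Theorem~\ref{theorem:lambda}), which verifies after the fact that $\hat\lambda-\lambda=-\E[\D_\lambda(O;\hat\eta)]+R_\D(\eta,\hat\eta)$, with $R_\D$ a sum of products of nuisance errors. There, Assumption~\ref{ass:modif} is used where $\E[\{1-e_S(V)\}\{\hat f(W)-l(V)\}]$ is rewritten as $\E[(1-S)\{\hat f(W)-f(W)\}]$, i.e.\ where $1-e_S(V)$ is identified with $\P(S=0\mid W)$. That is the same place it enters your piece (d).

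Your piece (c), by contrast, needs no assumption, despite your remark. With $l$ defined as the pooled regression, its contribution $\{1-e_S(V)\}\{f(W)-l(V)\}/\p$ is a purely nonparametric computation. Without Assumption~\ref{ass:modif}, the nonparametric gradient of this functional would simply carry the factor $\{1-e_S(V)\}/\P(S=0\mid W)$ on its first two terms.

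What your route buys is a transparent account of where each term comes from. More importantly, it supplies the tangent-space step, which the paper omits. An expansion with a second-order remainder only shows that $\D_\lambda$ is \emph{a} gradient in the model restricted by Assumption~\ref{ass:modif}, and $\D_\theta$ is one too. Your check that the $S$-given-$W$ component of $\D_\lambda$, namely $-\p^{-1}\{S-e_S(V)\}\{l(V)-\lambda\}$, has the form $\{S-e_S(V)\}b(V)$ is what makes the word ``efficient'' rigorous. The corresponding component of $\D_\theta$, $-\p^{-1}\{S-e_S(V)\}\{f(W)-\theta\}$, does not have that form. Projecting $\D_\theta$ onto the restricted tangent space returns exactly $\D_\lambda$, which is consistent with Lemma~\ref{lemma:gains}.

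What the paper's route buys in turn is the explicit remainder $R_\D$. That remainder is needed for the double-robustness and weak-convergence results, and a pathwise-derivative computation does not deliver it.
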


Note that the first two terms in $\D_\lambda(O;\eta)$ mirror those in $\D_\theta(O;\xi)$; the key distinction lies in the final two terms. In particular, $\D_\lambda(O;\eta)$ uses $l(V)=\E\{f(W)\mid V\}$, so that transport depends only on the lower-dimensional summary $V$ rather than the full covariate vector $W$.

\begin{theorem}[First-order bias of plug-in estimator for $\lambda$] \label{theorem:lambda}
For any estimator $\hat\eta=(\hat{m}, \hat f, \hat g,  \hat e_A, \hat e_S, \hat q_S, \hat \p, \hat l)$, 
let $\hat\lambda$ denote the plug-in estimator defined as 
\[\hat\lambda = \frac{1}{n\times \hat \p}\sum_{i=1}^n\one\{S_i=0\}\hat l(V_i).\]
Assume that for some $\delta > 0$, we have $0 < q_S(A,W,X) < 1$ and $0 < e_A(W) < 1$, and the estimators satisfy $\P(\delta<\hat q_S(A,W,X)<1-\delta) = 1$ and
$\P(\delta<\hat e_A(W)<1-\delta) = 1$. The bias of $\hat\lambda$ is equal to
$\hat\lambda - \lambda = -\E[\D_{\lambda}(O;\hat \eta)] + R_\D(\eta, \hat \eta),$ where $R_\D$ is a second-order term (defined in the proof of the theorem) equal to a sum of products of errors of the type $\int c(\eta,\hat\eta)(a(\hat\eta) - a(\eta))(b(\hat\eta) - b(\eta))\dd\P$ and is upper bounded as
{\footnotesize
\[
R_\D(\eta, \hat \eta) \lesssim 
\| \hat{q}_S - q_S \| \cdot \| \hat{m} - m \| 
+ \| \hat{e}_S - e_S \| \cdot \| \hat{l} - l \| 
+ \| \hat{e}_S - e_S \| \cdot \| \hat{f} - f \|\] 
\[
+ \| \hat{e}_A - e_A \| \cdot \| \hat{f} - f \| 
+ \| \hat{e}_A - e_A \| \cdot \| \hat{g} - g \| 
+ (\hat{\p} - \p)(\hat{\lambda} - \lambda),
\]
}

where $\lesssim$ denotes less than or equal to up to some constants; the proof is included in the Appendix \ref{subsec:vonmise_lambda}. 
\end{theorem}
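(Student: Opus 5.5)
The plan is a direct computation of $\E_\P[\D_\lambda(O;\hat\eta)]$, iterating expectations term by term and regrouping the result into $-(\hat\lambda-\lambda)$ plus products of nuisance errors. Here $\hat\lambda$ plays the role of the plug-in value $\E_{\hat\P}[\hat l(V)\mid S=0]$ that enters $\D_\lambda(O;\hat\eta)$. Throughout, write $h=af+g$ and $\hat h=a\hat f+\hat g$. I also use the fact that, by Assumption~\ref{ass:A3}, $m(A,W,X)$ integrated against $p_X(A,W)$ gives $h(A,W)$.

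\textbf{Step 1 (first term).} Condition on $(S=1,A,W,X)$. The residual $Y-\hat m$ averages to $m-\hat m$. Writing the indicator $S$ via $q_S$ turns the factor $(1-\hat q_S)/\hat q_S$ into $q_S(1-\hat q_S)/\hat q_S$. I will decompose this as
\[
\frac{q_S(1-\hat q_S)}{\hat q_S}=(1-q_S)+\frac{q_S-\hat q_S}{\hat q_S}.
\]
The piece $(1-q_S)$ converts the integral into one over $S=0$: it yields $\E[(1-S)\,\hat\omega(A,W)(m-\hat m)]$, where $\hat\omega=(A-\hat e_A)/\{\hat e_A(1-\hat e_A)\}$. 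The other piece gives the product $\|\hat q_S-q_S\|\,\|\hat m-m\|$, using that $\hat q_S$ is bounded away from $0$.

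\textbf{Step 2 (second term).} The second term contributes $\E[(1-S)\hat\omega(\hat m-\hat h)]$. Adding this to the Step 1 remainder, the $\hat m$ terms cancel and I get $\E[(1-S)\hat\omega(m-\hat h)]$. Integrating $m$ over $X$ given $(A,W,S=0)$ gives $\E[(1-S)\hat\omega(h-\hat h)]$. Next, average over $A$ given $(W,S=0)$ using $e_A$, which gives
\[
\frac{e_A}{\hat e_A}(f-\hat f+g-\hat g)-\frac{1-e_A}{1-\hat e_A}(g-\hat g).
\]
I will rewrite this as $(f-\hat f)$ plus terms proportional to $(e_A-\hat e_A)(f-\hat f)$ and $(e_A-\hat e_A)(g-\hat g)$; the latter are controlled by the positivity bounds on $\hat e_A$. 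So the first two terms together equal $\E[(1-S)(f-\hat f)]$ plus second-order terms.

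\textbf{Step 3 (last two terms).} The third term gives $\E[\{1-\hat e_S(V)\}\{\hat f(W)-\hat l(V)\}]$, and the fourth gives $\E[(1-S)(\hat l-\hat\lambda)]=\p\,\E[\hat l(V)\mid S=0]-\p\hat\lambda$. Here Assumption~\ref{ass:modif} is the key ingredient: $S\indep W\mid V$ implies $\E[(1-S)\phi(W)]=\E[\{1-e_S(V)\}\phi(W)]$ for any $\phi$. Applying this, Step 2's $\E[(1-S)(f-\hat f)]$ becomes $\E[(1-e_S)(f-\hat f)]$. Combining with the third term and adding and subtracting $e_S$ leaves
\[
\E[(1-e_S)(f-\hat l)]+\E[(e_S-\hat e_S)(\hat f-\hat l)].
\]
I then write $\hat f-\hat l=(\hat f-f)+(f-l)+(l-\hat l)$. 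The middle piece vanishes after conditioning on $V$, and the remaining pieces are the products $\|\hat e_S-e_S\|\,\|\hat f-f\|$ and $\|\hat e_S-e_S\|\,\|\hat l-l\|$. Finally, $\E[(1-e_S)(f-\hat l)]+\E[(1-S)\hat l]=\E[(1-S)f]=\p\lambda$, using $\E[(1-e_S)f]=\E[(1-S)f]$ and $\E[f\mid S=0]=\lambda$, the latter by Proposition~\ref{prop:identification} and Assumption~\ref{ass:modif}.

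\textbf{Step 4 (normalization).} Collecting terms, the total is $\p(\lambda-\hat\lambda)$ plus the products above. Dividing by $\hat\p$ rather than $\p$ produces the factor $\p/\hat\p$, so I will use $\p/\hat\p-1=(\p-\hat\p)/\hat\p$. This yields $-(\hat\lambda-\lambda)$ plus the term $(\hat\p-\p)(\hat\lambda-\lambda)$, scaled by constants, and applying Cauchy--Schwarz to each product gives the stated bound.

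I expect Step 3 to be the main obstacle. There, Assumption~\ref{ass:modif} has to be used twice, to swap $1-S$ for $1-e_S(V)$ and to kill the cross term $f-l$, while keeping the bookkeeping consistent with the data-dependent $\hat\lambda$ that appears inside $\D_\lambda(O;\hat\eta)$.
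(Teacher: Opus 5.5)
Your proposal is correct and follows essentially the same route as the paper's proof. Both compute $\hat\p\,\E[\D_\lambda(O;\hat\eta)]$ directly, regroup it into $\hat q_S$--$\hat m$, $\hat e_A$--$(\hat f,\hat g)$ and $\hat e_S$--$(\hat f,\hat l)$ products using $S\indep W\mid V$, and finish with the $\p/\hat\p$ normalization. The paper merely splits your single $(q_S-\hat q_S)/\hat q_S$ remainder into two pieces, and carries the true $\lambda$ in the last term before converting to $\hat\lambda$.

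One minor correction to your closing remark: $\E[\{e_S(V)-\hat e_S(V)\}\{f(W)-l(V)\}]=0$ needs only $l(V)=\E\{f(W)\mid V\}$ and that $e_S,\hat e_S$ are functions of $V$, not Assumption~\ref{ass:modif}. That assumption is used only to trade $1-S$ for $1-e_S(V)$ against functions of $W$.
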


Under suitable convergence-rate conditions on the nuisance estimators, the one-step estimator is asymptotically normal with asymptotic variance equal to the variance of its EIF, as shown in Section \ref{sec:estimator}. Therefore, comparing $\var[\D_\theta(O;\xi)]$ and $\var[\D_\lambda(O;\eta)]$ clarifies the efficiency gains obtained by using the identification result for $\lambda$ instead of $\theta$. We present this comparison in the following lemma.

\begin{lemma}[Efficiency gains] \label{lemma:gains}
    Under Assumption \ref{ass:modif}, the difference between variances of $\D_\lambda(O;\eta)$ and $\D_\theta(O;\xi)$ is equal to:
{\footnotesize\[\var[\D_\lambda(O;\eta)]-\var[\D_\theta(O;\xi)]=\frac{1}{\P(S = 0)^2}\E\bigg\{\P(S=0\mid V)\{\P(S=0\mid V)-1\}\big[f(W)-\E\{f(W)\mid V\}\big]^2\bigg\}\leq 0,\]}%
showing that using $\lambda$ instead of $\theta$ leads to at least the same or more efficiency. A proof for this lemma is in Appendix \ref{subsec:eff_gain}. 
\end{lemma}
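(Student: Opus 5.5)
The plan is to write $\D_\lambda = \D_\theta + \Delta$ and compute $\var[\D_\lambda] - \var[\D_\theta] = \var[\Delta] + 2\,\mathrm{Cov}(\D_\theta,\Delta)$. Both influence functions have mean zero, so this difference reduces to $\E[\Delta^2] + 2\E[\D_\theta\Delta]$.

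\textbf{Identifying $\Delta$.} The first two (residual) terms of $\D_\lambda$ and $\D_\theta$ coincide. Under Assumption~\ref{ass:modif} we have $\lambda=\theta$, because $\E[f(W)\mid S=0]=\E[\E\{f(W)\mid V,S=0\}\mid S=0]$ and $\E\{f(W)\mid V,S=0\}=l(V)$ by $S\indep W\mid V$. Collecting the remaining terms and writing $1-e_S(V)=\P(S=0\mid V)$ gives
\[
\p\,\Delta = \{1-e_S(V)\}\{f(W)-l(V)\} + (1-S)\{l(V)-f(W)\} = \{1-e_S(V)-(1-S)\}\{f(W)-l(V)\} = \{S-e_S(V)\}\{f(W)-l(V)\}.
\]

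\textbf{Computing the two pieces.} Let $\varepsilon=f(W)-l(V)$.
\begin{itemize}
\item For $\E[\Delta^2]$, condition on $W$; by assumption this is the same as conditioning on $(W,V)$, and $S\indep W\mid V$ applies. Then $\E[(S-e_S(V))^2\mid W]=e_S(V)\{1-e_S(V)\}$, so $\p^2\E[\Delta^2]=\E[e_S(V)\{1-e_S(V)\}\varepsilon^2]$.
\item For the cross term, first check that the two residual terms of $\D_\theta$ are orthogonal to any function of $(S,W)$. Conditional on $(S=1,A,W,X)$, the factor $Y-m$ has mean zero. Conditional on $(S=0,W)$, the term $\{A-e_A\}\{m-h(A,W)\}/[e_A(1-e_A)]$ has mean zero, since $\E[m\mid A,W,S=0]=h(A,W)$. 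Hence only $\p^{-1}(1-S)\{f(W)-\theta\}$ contributes to the covariance with $\Delta$, which is a function of $(S,W)$.
\item That contribution is $\p^{-2}\E[(1-S)\{S-e_S(V)\}\{f-\theta\}\varepsilon]$. Since $(1-S)S=0$, this equals $-\p^{-2}\E[(1-S)e_S(V)\{f-\theta\}\varepsilon]$. Conditioning on $W$ gives $-\p^{-2}\E[\{1-e_S(V)\}e_S(V)\{f-\theta\}\varepsilon]$.
\item Write $f-\theta=\varepsilon+(l(V)-\theta)$. The part involving $l(V)-\theta$ vanishes after conditioning on $V$, because $\E[\varepsilon\mid V]=0$ by the definition of $l$. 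So $\p^2\,\mathrm{Cov}(\D_\theta,\Delta)=-\E[e_S(1-e_S)\varepsilon^2]$.
\end{itemize}

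\textbf{Conclusion.} Summing, $\p^2\{\E[\Delta^2]+2\,\mathrm{Cov}\}=-\E[e_S(V)\{1-e_S(V)\}\varepsilon^2]=\E[\P(S=0\mid V)\{\P(S=0\mid V)-1\}\varepsilon^2]$, which is the stated expression. It is nonpositive because $\P(S=0\mid V)\in[0,1]$.

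The main obstacle is the orthogonality step for the residual terms in the cross-covariance. It needs careful conditioning: on $(S,A,W,X)$ for the $Y$-residual, and on $(S=0,W)$ with $A$ integrated out for the $m-h$ term. It also requires that $\Delta$ depends only on $(S,W)$, which holds because $V$ is a function of $W$.
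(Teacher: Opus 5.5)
Your proof is correct, but it is organized differently from the paper's.

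The paper argues component by component. Because the two influence functions share their first two terms, it compares $\var[\P(S=0\mid V)\{f(W)-\E(f(W)\mid V)\}]+\var[\one\{S=0\}\{\E(f(W)\mid V)-\lambda\}]$ directly with $\var[\one\{S=0\}\{f(W)-\theta\}]$. It evaluates each of these by conditioning on $V$ and using $S\indep W\mid V$. It then combines the last two through the conditional bias--variance identity $\E[\{f(W)-\theta\}^2\mid V]=\var\{f(W)\mid V\}+\{\E(f(W)\mid V)-\theta\}^2$. That reduction tacitly needs every cross-covariance to vanish: the residual terms against the $(S,W)$-measurable terms, and the two final terms of $\D_\lambda$ against each other. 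The paper leaves this implicit. The covariances do vanish, by exactly the conditioning you carry out and by $\E\{f(W)-l(V)\mid V\}=0$.

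You instead collapse the whole difference into the single term $\Delta=\p^{-1}\{S-e_S(V)\}\{f(W)-l(V)\}$. You then need only $\E[\Delta^2]$ and one covariance, you check the orthogonality of the residual terms explicitly, and the covariance between the last two terms of $\D_\lambda$ never arises. Your route is therefore more self-contained, and it reveals extra structure. Your two computations give $\mathrm{Cov}(\D_\theta,\Delta)=-\var(\Delta)$, hence $\mathrm{Cov}(\D_\lambda,\Delta)=0$ and $\var(\D_\theta)=\var(\D_\lambda)+\var(\Delta)$. This is the Pythagorean relation one expects when $\D_\lambda$ is the projection of $\D_\theta$ onto the tangent space of the model restricted by $S\indep W\mid V$.

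The paper's bookkeeping, in turn, shows the two opposing contributions separately. Up to the factor $\P(S=0)^{-2}$, the new term $\{1-e_S(V)\}\{f(W)-l(V)\}$ adds $\E[\P(S=0\mid V)^2\var\{f(W)\mid V\}]$. Coarsening the transport term removes $\E[\P(S=0\mid V)\var\{f(W)\mid V\}]$.
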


Although efficiency gains can be demonstrated by comparing the variances of the influence functions for the two estimators, the intuition behind the improved efficiency of using $\lambda$ instead of $\theta$ to transport causal effects lies in the identification result established in the proof of Proposition \ref{prop:identification}. Specifically, using Assumption~\ref{ass:modif} for $\lambda$, one can marginalize out all other variables in the CATE before transporting (i.e., compute $\E[ f(W) \mid V ]$), and then transport using only $V$. In contrast, estimating $\theta$ generally requires transporting treatment effect heterogeneity conditional on the full set of covariates and treatment mechanisms. This reduction in the information required for estimation explains the potential efficiency gain when targeting $\lambda$. \\

According to the difference in variances provided in Lemma \ref{lemma:gains}, greater efficiency gains are obtained when the quantity $[f(W) - \E\{f(W)\mid V\}]^2$ is large and the quantity $\P(S=0\mid V)\{\P(S=0\mid V) - 1\}$ is small. The largest efficiency gains are therefore realized when $ V $ is as uninformative about $ f(W) $ as possible, and when $ \P(S = 0 \mid V) = 0.5 $. When $ V $ is uninformative about $ f(W) $, the predicted quantity $ \E\{f(W) \mid V\} $ would be closer to the marginal average of $f(W)$, thus the squared error $[f(W) - \E\{f(W)\mid V\}]^2$ is greater.  We examine numerical scenarios with efficiency gain in Section \ref{sec:simulation}. 

\subsection{First-order bias of plug-in estimators for $\tau$}
\label{subsec:eff_tau}
We now construct an estimator of $\tau$ that we refer to as a \emph{collaborative efficient estimator} \citep{benkeser2020nonparametric, rudolph2025improving}. The estimator is collaborative for two reasons. First, it projects the CATE onto the selection score as discussed in Section \ref{subsec:selection_score}, replacing adjustment on the full covariate vector $W$ with adjustment on the lower-dimensional summary $e_S(W)$. Second, it replaces the full study-selection model $q_S(A,W,X)=\P(S=1\mid A,W,X)$ with the reduced model $r_S(A,W,X)=\P(S=1\mid m(A,W,X))$, thereby allowing the selection mechanism to depend on the outcome regression itself. We begin by introducing the notation underlying the proposed collaborative efficient estimator. This set of notation will be used both in the construction of the estimator and in the derivation of its semiparametric properties. Let $ \gamma = (m, e_A, e_S, r_S, f, k, g, t) $ denote the collection of relevant nuisance functions, which are defined as follows and are shown in Table \ref{tab:notation}:

\begin{align*}
    m(a,w,x)&=\E(Y\mid A=a, W=w, X=x, S=1)\\
    e_A(w)&=\P(A=1\mid W = w,S=0)\\
    e_S(w)&=\P(S=1\mid W=w)\\
    r_S(a,w,x)&=\P(S=1\mid m(A, W, X) = m(a, w, x))\\
    \p &=\P(S=0)\\
    k(w)&=\E\{f(W)\mid e_S(W)=e_S(w)\}.
\end{align*}

We start by introducing the following function $\U_\tau(O; \gamma)$, which will characterize the first-order bias of a plug-in estimator for $\tau$: 

{\small
\begin{align*}
   \U_\tau(O;\gamma) = \frac{1}{\p}\bigg[&\frac{S\{A-e_A(W)\}}{e_A(W)\{1-e_A(W)\}}\frac{1-r_S(A, W, X)}{r_S(A, W, X)}\times\{Y-m(A, W, X)\}\\
    &+\frac{(1-S)\{A-e_A(W)\}}{e_A(W)\{1-e_A(W)\}}\times\{m(A,W,X)-h(A,W)\}\\
    &+\{1-e_S(W)\}\times \{f(W) - k(W)\}\\
             &+(1-S)\{k(W)-\tau\}\bigg].
\end{align*}}

The quantity $r_S(A,W,X)$ is central to the efficiency gain of the proposed estimator. It denotes the probability of belonging to the source population conditional on the outcome regression $m(A,W,X)$. The efficiency gain is achieved by conditioning the population indicator $S$ on the summary-level statistic $m(A,W,X)$, rather than on the full set of variables $(A,W,X)$, thereby reducing dimensionality. This intuition is reflected in the second-order bias term of the plug-in estimator, which depends on $m(A,W,X)$ as one of its key components. We have the following theorem showing that the first-order bias of the plug-in estimator for $\tau$ satisfies the von Mises expansion so that $\U_\tau(O; \gamma)$ can play a valid role as the EIF in one-step correction.

\begin{theorem}[First-order bias of plug-in estimator for $\tau$] \label{theorem:tau}
For any estimator $\hat\gamma=(\hat m, \hat e_A, \hat e_S,  \hat r_S, \hat f, \hat k, \hat g, \hat t)$, 
let $\hat\tau$ denote the plug-in collaborative estimator defined as 
\[\hat\tau = \frac{1}{n\times \hat \p}\sum_{i=1}^n\one\{S_i=0\}\hat k(W_i).\] Assume that for some $\delta > 0$, we have $0 < r_S(A,W,X) < 1$ , $0 < e_A(W) < 1$, and the estimators satisfy $\P(\delta<\hat r_S(A,W,X)<1-\delta) = 1$ and
$\P(\delta<\hat e_A(W)<1-\delta) = 1$. The bias of $\hat\tau$ is equal to
$\hat\tau - \tau = -\E[\U_{\tau}(O;\hat \gamma)] + R_\U(\gamma, \hat \gamma),$ where $R_\U$ is a second-order term (defined in the proof of the theorem) equal to a sum of products of errors of the type $\int c(\gamma,\hat\gamma)(a(\hat\gamma) - a(\gamma))(b(\hat\gamma) - b(\gamma))\dd\P$ and is upper bounded as
{\footnotesize
\[
R_\U(\gamma, \hat \gamma) \lesssim 
\| \hat{r}_S - r^*_S \| \cdot \| \hat{m} - m \| 
+ \| \hat{e}_S - e_S \| \cdot \| \hat{k} - k \| 
+ \| \hat{e}_S - e_S \| \cdot \| \hat{f} - f \| 
+ \| \hat{e}_S - e_S \| \cdot \| k^* - k \|\] 
\[
+ \| \hat{e}_A - e_A \| \cdot \| \hat{f} - f \| 
+ \| \hat{e}_A - e_A \| \cdot \| \hat{g} - g \| 
+ (\hat{t} - t)(\hat{\tau} - \tau),
\]}

where we define $r^*_S(a,w,x) = \P(S=1\mid \hat m(A,W,X) = \hat m(a,w,x), m(A,W,X) = m(a,w,x))$ and $k^*(w) =  \E[f(W) \mid e_S(W) = e_S(w), \hat e_S(W) = \hat e_S(w)]$. The proof for this theorem is included in the Appendix \ref{subsec:vonmise_tau}.  
\end{theorem}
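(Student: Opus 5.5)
The plan is to compute $\E_\P[\U_\tau(O;\hat\gamma)]$ directly and show that, after adding $\hat\tau-\tau$, everything left is second order. Throughout, $\hat\tau$ denotes the functional $\E_{\hat\P}$-version $\hat t^{-1}\E[(1-S)\hat k(W)]$; the empirical-versus-population discrepancy is handled separately in the asymptotic analysis. I will split $\E[\U_\tau(O;\hat\gamma)]$ into four pieces, one per line of $\U_\tau$, and evaluate each by iterated expectations under $\P$.

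\emph{First line.} Condition on $(A,W,X)$ and $S=1$. The residual $Y-\hat m$ becomes $m-\hat m$, so this line equals
\[
\E\Big[\tfrac{q_S(A,W,X)\{1-\hat r_S\}}{\hat r_S}\cdot\tfrac{A-\hat e_A}{\hat e_A(1-\hat e_A)}(m-\hat m)\Big].
\]
Because $\hat r_S$ is a function of $\hat m(A,W,X)$ only, I cannot pair $q_S$ with $\hat r_S$ directly. Instead I will further condition on $(\hat m(A,W,X), m(A,W,X))$. This replaces $q_S$ by $r_S^*$, provided the remaining factor $(A-\hat e_A)/\{\hat e_A(1-\hat e_A)\}\,(m-\hat m)$ can be handled. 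I expect this step to be the main obstacle, since that factor is not measurable with respect to $(\hat m,m)$.

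My first attempt is to write $q_S/\hat r_S=(q_S-\hat r_S)/\hat r_S+1$ and split $q_S-\hat r_S=(q_S-r^*_S)+(r^*_S-\hat r_S)$. The $(r^*_S-\hat r_S)(m-\hat m)$ piece gives the term $\|\hat r_S-r^*_S\|\|\hat m-m\|$ by Cauchy--Schwarz, using the positivity bounds. The $q_S-r_S^*$ piece integrates to zero against functions of $(\hat m,m)$. For the residual dependence on $A,W$, I would argue (following \citealp{benkeser2020nonparametric}) that it is absorbed through conditioning on the pair $(\hat m,m)$, taking $r^*_S$ as the relevant projection. The leftover first-order piece is $\E[(1-q_S)\frac{A-\hat e_A}{\hat e_A(1-\hat e_A)}(m-\hat m)]$, which equals $\P(S=0)$ times the corresponding expectation over $X\mid A,W,S=0$.

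\emph{Second line.} It combines with that leftover piece: integrating $X$ under $S=0$ turns $\hat m$ into the $\P$-version of $\hat h$ built from $\hat m$. Conditioning on $(W,S=0)$ then produces $\E[(1-S)\{(e_A/\hat e_A)(h(1,W)-\hat h(1,W))-((1-e_A)/(1-\hat e_A))(h(0,W)-\hat h(0,W))\}]/t$. After writing $e_A/\hat e_A=1+(e_A-\hat e_A)/\hat e_A$, this yields $\E[(1-S)\{f-\hat f\}]$ plus products $(\hat e_A-e_A)(\hat f-f)$ and $(\hat e_A-e_A)(\hat g-g)$, using $h=af+g$.

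\emph{Third and fourth lines.} The third line is $\E[(1-\hat e_S)(\hat f-\hat k)]/\hat t$. I will add and subtract $(1-e_S)$ and use $\E[(1-e_S(W))\phi(W)]=\E[(1-S)\phi(W)]$. Conditioning on $(e_S,\hat e_S)$ and inserting $k^*$ then gives $\E[(1-S)(f-\hat k)]$ up to the remainders $(\hat e_S-e_S)(\hat f-f)$, $(\hat e_S-e_S)(\hat k-k)$ and $(\hat e_S-e_S)(k^*-k)$. The last of these arises because $\E[(1-S)(f-k)\mid e_S,\hat e_S]$ need not vanish unless $k=k^*$.

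Collecting everything, the $f$-terms cancel between the second and third lines. The fourth line, $\E[(1-S)(\hat k-\hat\tau)]/\hat t$, cancels $\hat\tau$ up to the factor $(t-\hat t)(\hat\tau-\tau)/\hat t$. What remains is $\tau-\hat\tau+R_\U$, so $\hat\tau-\tau=-\E[\U_\tau(O;\hat\gamma)]+R_\U$. Bounding each product by Cauchy--Schwarz, with the $\delta$-bounds on $\hat r_S$ and $\hat e_A$ controlling the constants, gives the stated inequality.
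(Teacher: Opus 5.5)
\paragraph{Verdict.} You follow the paper's decomposition, but the step you flagged as the main obstacle is a genuine gap, the paper's own proof skips it in the same way, and the statement as written is false.

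\paragraph{What matches the paper.} Your route is the paper's:
\begin{itemize}
\item condition the first line on $(A,W,X)$;
\item merge its leftover with the second line into $\E[(1-S)\,c(A,W)\{h-\hat h\}]$, where $c(A,W)=\{A-\hat e_A(W)\}/[\hat e_A(W)\{1-\hat e_A(W)\}]$;
\item split $e_A/\hat e_A$ using $h=af+g$;
\item add and subtract $e_S$ and insert $k^*$;
\item absorb the $t/\hat t$ factor.
\end{itemize}
These steps are fine except for one slip. The third line yields $\E[(1-S)(\hat f-\hat k)]$, not $\E[(1-S)(f-\hat k)]$; that is what cancels the $\E[(1-S)(f-\hat f)]$ from the second line.

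\paragraph{The gap.} The appeal to Benkeser et al.\ does not close it. After the legitimate conditioning on $(A,W,X)$, the first-line remainder is exactly $\E[c\,(m-\hat m)(q_S-\hat r_S)/\hat r_S]$. Conditioning further on $(\hat m,m)$ gives
\[
\E\bigg[\frac{m-\hat m}{\hat r_S}\Big\{\E(c\mid\hat m,m)\,(r_S^*-\hat r_S)+\mathrm{Cov}(c,S\mid\hat m,m)\Big\}\bigg].
\]
The first piece is your $\|\hat r_S-r_S^*\|\,\|\hat m-m\|$ term. The second is linear in $\hat m-m$, and its coefficient does not shrink as the nuisance estimators converge. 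It is second order only if $\mathrm{Cov}\{c(A,W),S\mid\hat m,m\}=0$, for instance when $(A,\hat e_A(W))$ is a function of $(\hat m,m)$.

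In Benkeser et al.\ the analogous step works because the weight is $A/\pi$ itself and every other factor is a function of the outcome regression. Here $c(A,W)$ is an extra factor that is not a function of $(\hat m,m)$. Bounding $|c|$ by a constant first does not help either: $\E\{|S/\hat r_S-1|\mid\hat m,m\}\ge|r_S^*/\hat r_S-1|$, which is the wrong direction, and the left side does not vanish even when $\hat r_S=r_S^*$.

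\paragraph{Counterexample.} This cannot be patched within the statement as written. Take the following setup:
\begin{itemize}
\item both populations randomized, with $\P(A=1\mid S=1)=1/2$ and $e_A\equiv 2/3$;
\item $W$ degenerate, and $Y$ independent of everything with mean $\mu$, so $m\equiv h\equiv\mu$ and $f=k=\tau=0$;
\item $\hat m\equiv\mu+\Delta$, with every other nuisance at its true value, so $\hat r_S=r_S^*=\P(S=1)$, $\hat e_A=e_A$, $\hat t=t$ and $\hat\tau=\tau=0$.
\end{itemize}
Every product in the claimed bound is then zero. Now compute the mean of each line of $\U_\tau(O;\hat\gamma)$:
\begin{itemize}
\item the second line has mean zero because $\E\{c(A)\mid S=0\}=0$;
\item the last two lines vanish;
\item since $\E\{c(A)\mid S=1\}=3/4-3/2=-3/4$, the first line has mean
\[
-\Delta\,\E\{Sc(A)\}(1-\hat r_S)/(\hat r_S t)=3\Delta/4.
\]
\end{itemize}
Hence $R_\U=3\Delta/4$, which is nonzero even though the claimed bound is zero.

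\paragraph{The paper's proof and a fix.} The paper's appendix proof has the same hole. It bounds the weight by $\epsilon^{-1}$ and then writes $r_S^\star$ in place of $S$ in its first two remainder terms without justification, so following it will not repair your argument.

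A correct version needs the reduced selection score to condition on enough to make the weight measurable. Two options:
\begin{itemize}
\item fit $\hat r_S$ as a function of $(A,\hat e_A(W),\hat m)$ and redefine $r_S^*=\P\{S=1\mid A,\hat e_A(W),\hat m(A,W,X),m(A,W,X)\}$;
\item add an explicit assumption that $\mathrm{Cov}\{c(A,W),S\mid\hat m,m\}=0$.
\end{itemize}
With either change, the rest of your argument delivers the bound.
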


The proof of this theorem proceeds in a manner closely analogous to that of Theorem \ref{theorem:lambda}. Inspection of $\U_\tau(O; \gamma)$ and comparison with $\D_\lambda(O;\eta)$ can further explain expected efficiency gains when using $\tau$ vs $\lambda$ for causal effect transportation.  Specifically, the first term of $\U_\tau(O; \gamma)$ involves division by $r_S(A,W,X)$ instead of $q_S(A,W,X)$, which is used in previous estimators. Since $r_S(A,W,X)$ conditions on dimension reduction of $W$ instead of the full vector, it may provide a more stable inverse probability weight that improves the estimator's efficiency. The extent to which this may lead to efficiency gains will be illustrated in Section \ref{sec:simulation} through simulations, showing cases where $\tau$ is more efficient than $\lambda$ or $\theta$. 

\section{Proposed Estimators}
\label{sec:estimator}
\subsection{Doubly robust unbiased transformation for CATE}
Our proposed estimators are built from the estimating functions $\D_\lambda(O; \eta)$ and $\U_\tau(O; \gamma)$. While estimation for $\lambda$ and $\theta$ enjoys double robustness, estimation of $\tau$ does not. For all estimators, Theorems~\ref{theorem:lambda} and~\ref{theorem:tau} show that the estimation error of the target population CATE, $f(W)$, enters most of the second-order remainder terms, so the performance of the proposed estimators depends on the consistent estimate of $f(W)$.  In order to obtain a doubly robust estimator for $f(W)$, we will use a \textit{doubly-robust unbiased pseudo-outcome}, a technique that is regularly used in causal inference problems to estimate high-dimensional parameters \citep{rubin2007doubly, luedtke2016super, kennedy2017non, kennedy2023towards} and that was similarly used in \citet{rudolph2025improving}.\\

Inspection of the definition of $h(a,w)$ reveals that its estimation involves integrating out the mediator and intermediate confounder (namely the vector $X$) from the outcome regression $m(A, W, X)$ with respect to $p_X(a,w) := \P(X = x\mid A = a, W = w, S = 0)$. In what follows, we assume that this integral can be easily computed (e.g., $X$ is discrete). Our strategy to estimate $f$ is based on the following result. 

\begin{lemma}[Doubly robust unbiased transformation for CATE estimation in target population]\label{lemma:DR}
Let $\kappa = (e_A,  e_S, q_S, m, p_X, h)$ denote the collection of relevant nuisance parameters. Assume $0 < e_A(W) < 1$ and $1 - e_S(W) > 0$ almost surely. Define $\D_h(O;\kappa) = \D_{h,1}(O;\kappa) - \D_{h,0}(O;\kappa)$, where
\begin{align*}
    \D_{h, a}(O; \kappa) &= \frac{\one(A = a, S = 1)}{e_A(W)^a\{1-e_A(W)\}^{1-a} \{1 - e_S(W)\}} \times \frac{1-q_S(a,W,X)}{q_S(a,W,X)}[Y - m(a,W,X)] \\
  &\qquad + \frac{\one(A = a, S=0)}{e_A(W)^a\{1-e_A(W)\}^{1-a} \{1 - e_S(W)\}} \times [m(a,W,X) - h(a,W)] \\
  &\qquad + h(a,W).
\end{align*}
For arbitrary $\kappa'$ we have $f(w) = \E[\D_{h}(O; \kappa') \mid W=w]$ and $g(w) = \E[\D_{h,0}(O; \kappa') \mid W=w]$ whenever $\kappa'$ is such that following two conditions hold: (1) $\{q'_S = q_S$ and $e'_S = e_S$ and $e'_A = e_A\}$ or $m' = m$, (2) $\{e'_S = e_S$ and $e'_A = e_A\}$ or $p'_X = p_X$.
\end{lemma}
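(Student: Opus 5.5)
The plan is to compute $\E[\D_{h,a}(O;\kappa')\mid W=w]$ directly for fixed $a\in\{0,1\}$ and arbitrary $\kappa'$, and show that it equals $h(a,w)$ under conditions (1) and (2). Both claims then follow by linearity: $f(w)=h(1,w)-h(0,w)=\E[\D_{h,1}-\D_{h,0}\mid W=w]=\E[\D_h(O;\kappa')\mid W=w]$, and $g(w)=h(0,w)=\E[\D_{h,0}(O;\kappa')\mid W=w]$.

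Throughout I take $h'(a,w)=\int m'(a,w,x)\dd p'_X(x\mid a,w)$, so that $h'$ is determined by $(m',p'_X)$. Write $\pi'_a(W)=e'_A(W)^a\{1-e'_A(W)\}^{1-a}\{1-e'_S(W)\}$, and let $\pi_a(W)$ be the same quantity at the true $(e_A,e_S)$. Note that $\pi_a(W)=\P(A=a,S=0\mid W)$.

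\textbf{Step 1 (first term).} Condition on $(A,W,X,S=1)$ to replace $Y-m'$ by $m-m'$. The key identity is Bayes' rule:
\[
\P(A=a,X=x,S=1\mid W)\,\frac{1-q_S(a,W,x)}{q_S(a,W,x)}=\P(A=a,X=x,S=0\mid W)=\pi_a(W)\,p_X(x\mid a,W).
\]
Applying it, the conditional mean of the first term is
\[
\frac{1}{\pi'_a(W)}\sum_x \P(A=a,X=x,S=1\mid W)\,\frac{1-q'_S}{q'_S}\,(m-m')(a,W,x).
\]
This vanishes when $m'=m$. When instead $q'_S=q_S$, $e'_S=e_S$ and $e'_A=e_A$, it reduces to $\int (m-m')(a,W,x)\dd p_X(x\mid a,W)$.

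\textbf{Step 2 (second term).} Its conditional mean is
\[
\frac{\pi_a(W)}{\pi'_a(W)}\Big\{\int m'(a,W,x)\dd p_X(x\mid a,W)-h'(a,W)\Big\}.
\]

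\textbf{Step 3 (case analysis).} Add $h'(a,W)$ from the third term and split according to condition (1).

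If the propensities $q_S,e_S,e_A$ are all correct, then $\pi_a/\pi'_a=1$, and the three terms sum to
\[
\int(m-m')\dd p_X+\int m'\dd p_X-h'+h'=\int m\dd p_X=h(a,W),
\]
regardless of $p'_X$. Condition (2) is then automatically satisfied.

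If instead $m'=m$, the first term is zero and the total is $(\pi_a/\pi'_a)\{h(a,W)-h'(a,W)\}+h'(a,W)$. Condition (2) now applies:
\begin{itemize}
\item if $e'_S=e_S$ and $e'_A=e_A$, the ratio is $1$ and the total is $h(a,W)$;
\item if $p'_X=p_X$, then $h'=\int m\dd p_X=h$, the bracket vanishes, and the total is again $h(a,W)$.
\end{itemize}
The positivity assumptions $0<e_A<1$ and $e_S<1$ (together with $q'_S$ bounded away from $0$) guarantee that all the ratios are finite.

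\textbf{Main obstacle.} The main care needed is the Bayes identity in Step 1. It is what converts the source-population residual, reweighted by $(1-q_S)/q_S$ and by $1/\pi_a$, into an integral against the target law $p_X(\cdot\mid a,W)$. I would verify it by writing $q_S=\P(S=1\mid A,W,X)$ and factoring $\P(A=a,X=x,S=s\mid W)=\P(S=s\mid A=a,W,X=x)\,\P(A=a,X=x\mid W)$ for $s=0,1$. The second point to keep in mind is that $h'$ must be the $p'_X$-integral of $m'$, since this is what makes the $p'_X=p_X$ branch work.
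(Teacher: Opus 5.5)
Your proof is correct and follows the same route as the paper's. Both arguments compute $\E[\D_{h,a}(O;\kappa')\mid W]$ term by term, use the Bayes identity $\P(A=a,X=x,S=1\mid W)\{1-q_S\}/q_S=\P(A=a,X=x,S=0\mid W)$ to turn the reweighted source residual into an integral against $p_X$, and take $h'=\int m'\dd p'_X$ (the paper does this implicitly, you state it explicitly); the only difference is that the paper adds and subtracts to write the bias as the exact sum of the two product remainders (\ref{eq:r17}) and (\ref{eq:r18}), which also gives a second-order rate, whereas you read off double robustness by a direct case split.
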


The above lemma inspires our construction of an estimator $\tilde f$ obtained by regressing a pseudo-outcome $\D_h(O_i;\hat\kappa)$ on covariates $W$, where $\hat\kappa$ denotes a preliminary estimate of the nuisance parameters described in the lemma. Analogous estimators of $g(w)$ may be obtained by using $\D_{h,0}(O_i;\hat\kappa)$ as a pseudo-outcome as $\tilde g$. We provide the proof in Appendix \ref{subsec:DR_CATE}. In what follows, we describe in detail the proposed algorithms to estimate $\theta$, $\lambda$, and $\tau$. 

\subsection{Estimation for $\theta$, $\lambda$,  and $\tau$}

In this section, we propose one-step semiparametric estimators for $\theta$, $\lambda$, and $\tau$. The procedure begins with initial plug-in estimators, denoted by $\hat{\theta}$, $\hat{\lambda}$, and $\hat{\tau}$, which are obtained by substituting estimated nuisance functions into the target parameter definitions. These initial estimates are then corrected by adding the empirical mean of the corresponding influence function. Specifically: $ \tilde{\theta} = \hat{\theta} + \frac{1}{n}\sum_{i = 1}^{n}\D_\theta(O_i;\hat{\xi})$, $ \tilde{\lambda} = \hat{\lambda} + \frac{1}{n}\sum_{i = 1}^{n}\D_\lambda(O_i;\hat{\eta})$, and $ \tilde{\tau} = \hat{\tau} + \frac{1}{n}\sum_{i = 1}^{n}\U_\tau(O_i;\hat{\gamma})$.  Incorporating a doubly robust unbiased transformation for estimating $f(W)$ as described in the previous subsection, we can implement three estimators as follows:

\renewcommand{\labelitemi}{} 
\begin{itemize}
    \item \textbf{Step 1} First, we calculate estimates of $\hat{m}(A,W,X)$, $\hat{e}_A(W)$, $\hat{e}_S(W)$, $\hat{q}_S(A,W,X)$, $\hat{r}_S(A,W,X)$, and $\hat{p}_X(A,W)$ by fitting a data-adaptive regression that incorporates machine learning. We will apply this estimation approach to all subsequent regression estimations.
    
    \item \textbf{Step 2} Compute a doubly robust estimator of $f(w)$ by regressing $\D_h(O_i;\hat\kappa)$ on $W$. Denote the resulting estimator by $\tilde f(w)$.
    \begin{itemize}[label={},leftmargin=*]
        \item \textbf{Step 2.1} For each $a\in\{0,1\}$, compute $\hat h(a,w)$ by integration. For example, if $X$ is discrete, compute
        \[\hat h(a,w) = \sum_{x \in \mathcal{X}} \hat m(a,w,x)\, \hat p_X(a,w)\]
        \item \textbf{Step 2.2} For each observation $i$, compute $\D_{h,1}(O_i;\hat\kappa)$ and $\D_{h,0}(O_i;\hat\kappa)$, and then set
        \[
        \D_h(O_i;\hat\kappa)=\D_{h,1}(O_i;\hat\kappa)-\D_{h,0}(O_i;\hat\kappa),
        \]
        where $\hat\kappa=(\hat e_A,\hat e_S,\hat q_S,\hat m,\hat p_X,\hat h)$.
        
        \item \textbf{Step 2.3} Regress $\D_h(O_i;\hat\kappa)$ on $W$ to obtain $\tilde f(W)$, and regress $\D_{h,0}(O_i;\hat\kappa)$ on $W$ to obtain $\tilde h(0,W)$, denoted by $\tilde g(W)$.
    \end{itemize}
    \item \textbf{Step 3} Compute the doubly robust transformed estimate of $h(A,W)$ as $\tilde{h}(A,W)$ where $ \tilde{h}(A,W) = A \times \tilde{f}(W) + \tilde{g}(W)$.

    \item \textbf{Step 4} Using estimated nuisance parameters, compute $\D_\theta(O_i;\hat\xi)$ for each observation. Estimate
    $\hat\theta=\frac{1}{n\hat \p}\sum_{i=1}^n \one(S_i=0)\tilde f(W_i),$
    and then compute the one-step estimator $\tilde\theta=\hat\theta+\frac{1}{n}\sum_{i=1}^n \D_\theta(O_i;\hat\xi).$

    \item \textbf{Step 5} With known $V$ satisfying Assumption~\ref{ass:modif}, regress $\tilde f(W)$ on $V$ to obtain an estimate $\tilde l(V) = \hat \E\{\tilde f(W)\mid V\}$. Then compute the plug-in estimator
    $\hat\lambda=\frac{1}{n\hat t}\sum_{i=1}^n \one(S_i=0)\tilde l(V_i).$
    Using estimated nuisance parameters, construct $\D_\lambda(O_i;\hat\eta)$ and compute
    $\tilde\lambda=\hat\lambda+\frac{1}{n}\sum_{i=1}^n \D_\lambda(O_i;\hat\eta)$
    where we have $\hat\eta=(\hat m,\hat e_A,\hat e_S,\hat q_S,\tilde f,\tilde l,\tilde g,\hat t)$.

    \item \textbf{Step 6} Compute $\tilde k(W)$ by regressing $\tilde f(W)$ on the estimated selection score $\hat e_S(W)$---$\tilde k(W) = \hat \E\{\tilde f(W) \mid \hat e_S(W)\}$. Then compute the plug-in estimator $\hat\tau=\frac{1}{n\hat \p}\sum_{i=1}^n \one\{S_i=0\}\tilde k(W_i).$ Using estimated nuisance parameters, construct $\U_\tau(O_i;\hat\gamma)$ and compute
   $\tilde\tau=\hat\tau+\frac{1}{n}\sum_{i=1}^n \U_\tau(O_i;\hat\gamma),$
    where we have $\hat\gamma=(\hat m,\hat e_A,\hat e_S,\hat r_S,\tilde f,\tilde k,\tilde g,\hat t)$.

\end{itemize}

Next, we derive asymptotic results to justify the use of Wald-type confidence intervals and hypothesis tests based on each of our proposed estimators. These procedures rely on asymptotic normality and are valid under regularity conditions that ensure appropriate convergence rates for the nuisance parameters. In particular, the following theorems establish the weak convergence of the estimator under certain assumptions.

\begin{theorem}[Weak convergence of $\tilde{\lambda}$]
Assume the second-order term $R_\D(\eta, \hat{\eta}) = o_\P(n^{-1/2})$, and that $\hat{q}_S(A,W,X)$ and $\hat{e}_A(W)$ are bounded away from 0 and 1 with probability 1. In addition, assume that the function $\D_\lambda(O; \hat{\eta})$ is in a Donsker class. Then we have $\sqrt{n}(\tilde{\lambda} - \lambda) \xrightarrow{d} \mathcal{N}\left(0, \operatorname{Var}[\D_\lambda(O; \eta)]\right)$.
\end{theorem}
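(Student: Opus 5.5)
The plan is the standard one-step decomposition, with Theorem~\ref{theorem:lambda} supplying the bias expansion. By definition $\tilde\lambda = \hat\lambda + \P_n \D_\lambda(O;\hat\eta)$, where $\P_n$ denotes the empirical measure. Theorem~\ref{theorem:lambda} gives $\hat\lambda - \lambda = -\E[\D_\lambda(O;\hat\eta)] + R_\D(\eta,\hat\eta)$, with the expectation taken over a new observation $O$ holding $\hat\eta$ fixed. Substituting yields
\begin{align*}
\tilde\lambda - \lambda &= (\P_n - \P)\D_\lambda(O;\hat\eta) + R_\D(\eta,\hat\eta)\\
&= (\P_n - \P)\D_\lambda(O;\eta) + (\P_n-\P)\{\D_\lambda(O;\hat\eta) - \D_\lambda(O;\eta)\} + R_\D(\eta,\hat\eta).
\end{align*}
The first term is a centered i.i.d. average, since $\E[\D_\lambda(O;\eta)]=0$ as an EIF by Theorem~\ref{theorem:EIF_lambda}. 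Under the positivity bounds it has finite variance, so the classical central limit theorem gives $\sqrt n(\P_n-\P)\D_\lambda(O;\eta) \xrightarrow{d} \mathcal N(0,\var[\D_\lambda(O;\eta)])$. The third term is $o_\P(n^{-1/2})$ by assumption.

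The remaining work is the empirical-process term, which I expect to be the main obstacle. The assumption that $\D_\lambda(O;\hat\eta)$ lies in a Donsker class, together with $L_2(\P)$ consistency $\|\D_\lambda(O;\hat\eta) - \D_\lambda(O;\eta)\| = o_\P(1)$, gives $(\P_n-\P)\{\D_\lambda(O;\hat\eta)-\D_\lambda(O;\eta)\} = o_\P(n^{-1/2})$. This is the asymptotic equicontinuity lemma (van der Vaart \& Wellner, Lemma 19.24 in \citealp{vanderVaart98}).

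Establishing the $L_2$ consistency is where the boundedness conditions enter. I would expand $\D_\lambda(O;\hat\eta)-\D_\lambda(O;\eta)$ term by term, adding and subtracting to isolate one nuisance difference at a time. The factors $1/\{\hat e_A(1-\hat e_A)\}$ and $(1-\hat q_S)/\hat q_S$ are bounded by constants depending on $\delta$, and I would also take $Y$ bounded (or with finite second moment) and $\hat\p$ bounded away from zero. Under these conditions, the $L_2$ norm of the difference is bounded by a constant times
\[
\|\hat m - m\| + \|\hat e_A - e_A\| + \|\hat q_S - q_S\| + \|\hat f - f\| + \|\hat g - g\| + \|\hat l - l\| + \|\hat e_S - e_S\| + |\hat\p - \p| + |\hat\lambda-\lambda|.
\]
The rate condition on $R_\D$ does not by itself imply that each of these terms is $o_\P(1)$. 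I would therefore state nuisance consistency explicitly as the implicit regularity condition accompanying the Donsker assumption, as is standard.

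Combining the three pieces by Slutsky's lemma gives $\sqrt n(\tilde\lambda-\lambda) = \sqrt n(\P_n-\P)\D_\lambda(O;\eta) + o_\P(1) \xrightarrow{d} \mathcal N(0,\var[\D_\lambda(O;\eta)])$. As a remark, I would note that cross-fitting could replace the Donsker condition: conditional on the training fold, the empirical-process term has conditional variance $\|\D_\lambda(\hat\eta)-\D_\lambda(\eta)\|^2/n$, and Chebyshev's inequality then gives the $o_\P(n^{-1/2})$ rate.
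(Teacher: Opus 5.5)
Your proposal is correct and follows essentially the same route as the paper, which simply defers to the standard one-step argument of \citet{kennedy2024semiparametric}: substitute the expansion of Theorem~\ref{theorem:lambda}, then split the error into a CLT term, an empirical-process term handled by the Donsker condition together with $L_2$ consistency, and the $o_\P(n^{-1/2})$ remainder $R_\D$. You are also right that $\|\D_\lambda(O;\hat\eta)-\D_\lambda(O;\eta)\| = o_\P(1)$ and finiteness of $\var[\D_\lambda(O;\eta)]$ need to be stated explicitly: Kennedy's result assumes them, the rate condition on $R_\D$ does not imply them, and the statement here leaves them implicit.
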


\begin{theorem}[Weak convergence of $\tilde{\tau}$]
    Assume the second-order term $R_\U(\gamma, \hat{\gamma}) = o_\P(n^{-1/2})$, and that $\hat{r}_S(A,W,X)$ and $\hat{e}_A(W)$ are bounded away from 0 and 1 with probability 1. In addition, assume that the function $\U_\tau(O; \hat{\gamma})$ is in a Donsker class. Then we have $\sqrt{n}(\tilde{\tau} - \tau) \xrightarrow{d} \mathcal{N}\left(0, \operatorname{Var}[\U_\tau(O; \gamma)]\right)$.
\end{theorem}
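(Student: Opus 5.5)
We follow the standard one-step argument, with Theorem~\ref{theorem:tau} playing the role that the von Mises expansion plays for an EIF. Write $\P_n$ for the empirical measure, so that $\tilde\tau = \hat\tau + \P_n \U_\tau(O;\hat\gamma)$. Theorem~\ref{theorem:tau} gives $\hat\tau - \tau = -\E[\U_\tau(O;\hat\gamma)] + R_\U(\gamma,\hat\gamma)$, where the expectation is over a new draw $O$ with $\hat\gamma$ held fixed. Adding the two identities yields the decomposition
\[
\tilde\tau - \tau = (\P_n - \P)\U_\tau(O;\gamma) + (\P_n-\P)\{\U_\tau(O;\hat\gamma) - \U_\tau(O;\gamma)\} + R_\U(\gamma,\hat\gamma).
\]
Before using it, we check that $\E[\U_\tau(O;\gamma)] = 0$ at the truth. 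The first term of $\U_\tau$ has mean zero because $\E[Y - m \mid A,W,X,S=1]=0$ and $r_S$ is a function of $(A,W,X)$. The second has mean zero since $\E[m(A,W,X)\mid A,W,S=0] = h(A,W)$. The third vanishes because $1-e_S(W)$ is a function of $e_S(W)$ and $k = \E[f\mid e_S]$. The fourth vanishes by the definition of $\tau$ in (\ref{eq:theta_transform}). Hence the first term of the decomposition is a centered i.i.d.\ average.

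We then treat the three terms in turn. The first term, multiplied by $\sqrt n$, converges to $\mathcal N(0,\var[\U_\tau(O;\gamma)])$ by the central limit theorem. Finite variance follows from the positivity bounds on $r_S$ and $e_A$ and boundedness of $Y$ (or a second-moment condition), which make every weight in $\U_\tau$ bounded. The third term is $o_\P(n^{-1/2})$ by assumption. One subtlety is that the bound in Theorem~\ref{theorem:tau} involves $\hat t$ and $\hat\tau$ through $(\hat t - t)(\hat\tau-\tau)$. Since $\hat t$ is a sample proportion, this term is $O_\P(n^{-1/2})\cdot o_\P(1)$, which is $o_\P(n^{-1/2})$ once $\hat\tau$ is consistent, and that consistency is implied by the other rate conditions.

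The main obstacle is the empirical-process term $(\P_n-\P)\{\U_\tau(\hat\gamma)-\U_\tau(\gamma)\}$. With the Donsker assumption, it suffices by the asymptotic equicontinuity lemma (van der Vaart, Lemma 19.24) to show $\|\U_\tau(\cdot;\hat\gamma)-\U_\tau(\cdot;\gamma)\|_{L_2(\P)} = o_\P(1)$. Here $\U_\tau$ is a fixed smooth map of $(\gamma,\tau)$ whose denominators $\hat r_S$, $1-\hat r_S$, $\hat e_A$, $1-\hat e_A$ and $\hat t$ are bounded away from zero with probability one. We therefore expand the difference term by term, adding and subtracting one nuisance at a time. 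Each resulting piece is bounded by a constant times the $L_2$ error of a single nuisance ($\hat m,\hat r_S,\hat e_A,\hat e_S,\hat f,\hat g,\hat k,\hat t,\hat\tau$). This reduces the task to $L_2$-consistency of each nuisance.

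The delicate point is that the remainder bound only controls $\hat r_S$ relative to $r_S^*$ and $\hat k$ relative to $k$ via $k^*$. Any step that needs $L_2$-consistency beyond what $R_\U=o_\P(n^{-1/2})$ delivers we would state as an implicit regularity condition, or require it of the nuisance estimators. Alternatively, cross-fitting removes the need for the Donsker condition; conditionally on the training fold the term has variance $\lesssim \|\U_\tau(\hat\gamma)-\U_\tau(\gamma)\|^2/n$, and Chebyshev's inequality then gives the bound. Combining the three terms with Slutsky's lemma yields $\sqrt n(\tilde\tau-\tau)\xrightarrow{d}\mathcal N(0,\var[\U_\tau(O;\gamma)])$.
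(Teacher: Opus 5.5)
Your proposal is correct and is essentially the paper's argument. The paper simply defers to the standard one-step analysis of \citet{kennedy2024semiparametric}, which is exactly your decomposition into a centered i.i.d.\ average, an empirical-process term controlled by the Donsker condition (or by cross-fitting), and the remainder $R_\U$. You are also right that the empirical-process step needs $\|\U_\tau(\cdot;\hat\gamma)-\U_\tau(\cdot;\gamma)\|_{L_2(\P)}=o_\P(1)$. This is not implied by $R_\U=o_\P(n^{-1/2})$ and is only tacit in the paper's statement, so adding it as an explicit condition is the appropriate fix.
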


The proof of the above two theorems follows the same steps given in \citet{kennedy2024semiparametric}. The Donsker assumption can be satisfied whenever the nuisance estimators are in classes with bounded entropy integrals such as Vapnik-Chervonenkis classes. If this assumption is too restrictive given the desired nuisance estimators, one can instead rely on sample-splitting procedures that do away with this assumption \citep{zheng2011cross, chernozhukov2018double, kennedy2024semiparametric}. Since $R_\D$ and $R_\U$ are second-order terms, the condition of the order of convergence being $o_\P(n^{-1/2})$ would hold if each component of $\eta$ and $\gamma$ is estimated at a rate $ o_\P(n^{-1/4})$ or faster. In addition, as can be seen in the proof of the theorems, control of these remainder terms also requires  \( (\hat{t} - t)(\hat{\lambda} - \lambda) = o_\P(n^{-1/2}) \) and \( (\hat{t} - t)(\hat{\tau} - \tau) = o_\P(n^{-1/2}) \). This holds as long as the plug-in estimators  $\hat{\lambda}$ and $\hat{\tau}$ are consistent, since $\hat{t} - t = O_\P(n^{-1/2})$ by the central limit theorem.

\begin{coro}
    Assume the conditions in Theorem \ref{theorem:lambda} hold and that $R_\D(\eta,\hat\eta) = o_\P(1)$, 
    then $\tilde{\lambda}$ is consistent.
\end{coro}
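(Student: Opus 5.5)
The plan is to write $\tilde\lambda - \lambda$ using the one-step construction and the von Mises expansion of Theorem~\ref{theorem:lambda}. By definition $\tilde\lambda = \hat\lambda + \frac{1}{n}\sum_{i=1}^n \D_\lambda(O_i;\hat\eta)$. Substituting $\hat\lambda - \lambda = -\E[\D_\lambda(O;\hat\eta)] + R_\D(\eta,\hat\eta)$ gives the decomposition
\begin{align*}
\tilde\lambda - \lambda = (\mathbb{P}_n - \P)\D_\lambda(O;\hat\eta) + R_\D(\eta,\hat\eta),
\end{align*}
where $\mathbb{P}_n$ is the empirical measure and $\P f = \int f\,\dd\P$ with $\hat\eta$ held fixed. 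The second term is $o_\P(1)$ by assumption, so it remains to show that the empirical process term is $o_\P(1)$.

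For the first term, I would split it as
\begin{align*}
(\mathbb{P}_n - \P)\D_\lambda(O;\hat\eta) = (\mathbb{P}_n - \P)\{\D_\lambda(O;\hat\eta) - \D_\lambda(O;\bar\eta)\} + (\mathbb{P}_n - \P)\D_\lambda(O;\bar\eta),
\end{align*}
where $\bar\eta$ is a fixed limit of $\hat\eta$; one can take $\bar\eta = \eta$ if all nuisances are consistent. The last piece is $O_\P(n^{-1/2})$ by the central limit theorem, since $\D_\lambda(O;\bar\eta)$ has finite variance. Finite variance follows from the boundedness of $q_S$ and $e_A$ away from 0 and 1, together with bounded $Y$ and nuisance functions.

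A simpler route avoids needing a limit $\bar\eta$. I would use either the Donsker/Glivenko--Cantelli condition from the weak convergence theorem or sample splitting. With cross-fitting, conditional on the training fold, $\D_\lambda(O;\hat\eta)$ is a fixed function. It is also uniformly bounded: the condition $\P(\delta<\hat q_S<1-\delta)=1$, $\P(\delta<\hat e_A<1-\delta)=1$ and $\hat\p$ bounded away from 0 give a deterministic bound, provided $Y$ and the outcome-type nuisances are bounded. Chebyshev's inequality then gives $(\mathbb{P}_n-\P)\D_\lambda(O;\hat\eta) = O_\P(n^{-1/2})$ conditionally, and hence unconditionally. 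Combining the two pieces yields $\tilde\lambda - \lambda = O_\P(n^{-1/2}) + o_\P(1) = o_\P(1)$, which is consistency.

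The main obstacle is the empirical process term, because the corollary does not itself state a Donsker or sample-splitting condition. I would handle it by invoking the Donsker (or Glivenko--Cantelli) assumption carried over from the weak convergence theorem, or cross-fitting as in the paper's remarks. I would state explicitly that uniform boundedness of $\D_\lambda(O;\hat\eta)$ comes from the positivity bounds in Theorem~\ref{theorem:lambda}.

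A secondary subtlety is that $R_\D$ contains $(\hat\p-\p)(\hat\lambda-\lambda)$, which involves $\hat\lambda$ itself. I would note that this is harmless: $\hat\p-\p=O_\P(n^{-1/2})$, and $\hat\lambda$ is bounded because $\hat l$ is bounded, so the term is $o_\P(1)$ regardless of whether $\hat\lambda$ is consistent.
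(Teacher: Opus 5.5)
Your argument is correct and is essentially what the paper has in mind. The paper gives no separate proof: it treats consistency as immediate from the identity $\tilde\lambda-\lambda=(\mathbb{P}_n-\P)\D_\lambda(O;\hat\eta)+R_\D(\eta,\hat\eta)$ implied by Theorem~\ref{theorem:lambda}, adding only that the $(\hat\p-\p)(\hat\lambda-\lambda)$ piece of $R_\D$ vanishes because $\hat\p-\p=o_\P(1)$, and it leaves the empirical-process term implicitly controlled by the Donsker or sample-splitting conditions stated for the weak-convergence theorem, so your explicit flagging of that unstated condition is a real improvement.

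One small refinement for the cross-fitting route. If $\hat\p$ and the plug-in $\hat\lambda$ inside $\D_\lambda(O;\hat\eta)$ are computed on the full sample, that function is not literally fixed given the training fold. However, they enter only through the overall factor $1/\hat\p$ and the term $-(1-S)\hat\lambda/\hat\p$, so you can pull these $O_\P(1)$ scalars out and apply your Chebyshev bound to the remaining training-fold-measurable functions and to $(\mathbb{P}_n-\P)(1-S)$.
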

A sufficient condition for $R_\D(\eta,\hat\eta) = o_\P(1)$ is that at least one of the following holds:
{\scriptsize
\[
\text{(i)}\quad 
\{\|\hat q_S - q_S\| = o_\P(1)\ \text{or}\ \|\hat m - m\| = o_\P(1)\}
\, \text{and} \, \|\hat e_S - e_S\| = o_\P(1)
\, \text{and} \, \|\hat e_A - e_A\| = o_\P(1)
\]

\[
\text{(ii)}\quad 
\{\|\hat q_S - q_S\| = o_\P(1)\ \text{or}\ \|\hat m - m\| = o_\P(1)\}
\, \text{and} \, \|\hat e_S - e_S\| = o_\P(1)
\, \text{and} \, \|\hat f - f\| = o_\P(1) \,
\text{and} \, \|\hat g - g\| = o_\P(1)
\]

\[
\text{(iii)}\quad 
\{\|\hat q_S - q_S\| = o_\P(1)\ \text{or}\ \|\hat m - m\| = o_\P(1)\}
\, \text{and} \, \|\hat e_A - e_A\| = o_\P(1)
\, \text{and} \, \|\hat f - f\| = o_\P(1)
\, \text{and} \, \|\hat l - l\| = o_\P(1) \
\]

\[
\text{(iv)}\quad 
\{\|\hat q_S - q_S\| = o_\P(1)\ \text{or}\ \|\hat m - m\| = o_\P(1)\}
\, \text{and} \, \|\hat l - l\| = o_\P(1)
\, \text{and} \, \|\hat f - f\| = o_\P(1)
\, \text{and} \, \|\hat g - g\| = o_\P(1).
\]
}

The last term in $R_\D(\eta,\hat\eta)$ involves $\hat{t} - t$, where $\hat{t}$ is the empirical average of $S = 0$. By the weak law of large numbers, $\hat{t} - t = o_\P(1)$. All conditions require either a consistent estimation of the outcome regression $m$ or a consistent estimation of $q_S$.

\begin{coro}
    Assume the conditions in Theorem \ref{theorem:tau} hold and that $R_\U(\gamma, \hat{\gamma}) = o_\P(1)$, 
    then $\tilde{\tau}$ is consistent.
\end{coro}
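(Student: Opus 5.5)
The plan is to mimic the argument behind the analogous corollary for $\tilde\lambda$, built on the decomposition in Theorem~\ref{theorem:tau}. The starting point is the identity
\[
\tilde\tau - \tau = \hat\tau - \tau + \frac{1}{n}\sum_{i=1}^n \U_\tau(O_i;\hat\gamma).
\]
Substituting $\hat\tau - \tau = -\E[\U_\tau(O;\hat\gamma)] + R_\U(\gamma,\hat\gamma)$ from Theorem~\ref{theorem:tau} gives
\[
\tilde\tau - \tau = (\mathbb{P}_n - \P)\,\U_\tau(O;\hat\gamma) + R_\U(\gamma,\hat\gamma),
\]
where $\mathbb{P}_n$ denotes the empirical measure and the expectation is over a fresh draw $O$ with $\hat\gamma$ held fixed. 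Since $R_\U(\gamma,\hat\gamma) = o_\P(1)$ by hypothesis, it only remains to show that the empirical-process term $(\mathbb{P}_n - \P)\U_\tau(O;\hat\gamma)$ is $o_\P(1)$.

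For that term I will use the boundedness conditions in Theorem~\ref{theorem:tau}. The estimators satisfy $\delta < \hat r_S < 1-\delta$ and $\delta < \hat e_A < 1-\delta$, and $\hat t$ is bounded away from zero with probability tending to one because $\hat t \to t > 0$ by the weak law of large numbers. Hence every weight appearing in $\U_\tau(O;\hat\gamma)$ is uniformly bounded: $1/\hat t$, $\{\hat e_A(1-\hat e_A)\}^{-1}$, $(1-\hat r_S)/\hat r_S$ and $1-\hat e_S$. Assuming $Y$ and the fitted regressions $\hat m,\tilde f,\tilde g,\tilde k$ are bounded, or at least have uniformly bounded second moments, the variance $\P\,\U_\tau(O;\hat\gamma)^2$ is $O_\P(1)$.

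I then split into two cases. Under cross-fitting, $\hat\gamma$ is independent of the evaluation fold. Conditionally on $\hat\gamma$, Chebyshev's inequality gives $(\mathbb{P}_n-\P)\U_\tau(O;\hat\gamma) = O_\P(n^{-1/2})$, and averaging over folds preserves this rate. Without cross-fitting, I will invoke the Glivenko--Cantelli (or Donsker) condition used in the weak convergence theorem for $\tilde\tau$. This gives $\sup_{\gamma'}|(\mathbb{P}_n-\P)\U_\tau(O;\gamma')| = o_\P(1)$ over the class containing $\hat\gamma$.

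The main obstacle is this empirical-process step, because the corollary as stated lists only the conditions of Theorem~\ref{theorem:tau}. Some complexity or sample-splitting condition must therefore be read in implicitly, and I would state it explicitly: either cross-fitting or a Glivenko--Cantelli class, together with the boundedness of $Y$ and the nuisance fits noted above. A secondary point is that $R_\U$ includes the term $(\hat t - t)(\hat\tau - \tau)$. This term is harmless: $\hat t - t = o_\P(1)$, and $\hat\tau - \tau$ is bounded in probability because $\hat k$ is bounded and $\hat t$ is bounded away from zero. With both pieces $o_\P(1)$, we conclude $\tilde\tau \to \tau$ in probability.
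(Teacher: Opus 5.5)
Your argument is correct and matches what the paper intends. The paper gives no separate proof of this corollary; it rests on the same decomposition $\tilde\tau-\tau=(\mathbb{P}_n-\P)\,\U_\tau(O;\hat\gamma)+R_\U(\gamma,\hat\gamma)$ from Theorem~\ref{theorem:tau}, with the empirical-process term controlled by a Donsker-type condition or by sample splitting, as in the weak-convergence theorem for $\tilde\tau$.

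You are right that the corollary's hypotheses as written omit that empirical-process (or cross-fitting) condition and the moment bounds on $Y$ and the fitted nuisances, so stating them explicitly is the appropriate fix. Your separate treatment of $(\hat t-t)(\hat\tau-\tau)$ is harmless but redundant, since that term is already part of $R_\U$, which is assumed to be $o_\P(1)$.
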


A sufficient condition for $R_\U(\gamma, \hat{\gamma}) = o_\P(1)$ is that at least one of the following holds:
{\scriptsize
\[
\text{(i)}\quad 
\{\|\hat r_S - r_S^*\| = o_\P(1)\ \text{or}\ \|\hat m - m\| = o_\P(1)\}
\, \text{and} \, \|\hat e_S - e_S\| = o_\P(1)
\, \text{and} \, \|\hat e_A - e_A\| = o_\P(1)
\]

\[
\text{(ii)}\quad 
\{\|\hat r_S - r_S^*\| = o_\P(1)\ \text{or}\ \|\hat m - m\| = o_\P(1)\}
\, \text{and} \, \|\hat e_S - e_S\| = o_\P(1)
\, \text{and} \, \|\hat f - f\| = o_\P(1) \,
\text{and} \, \|\hat g - g\| = o_\P(1)
\]

\[
\text{(iii)}\quad 
\{\|\hat r_S - r_S^*\| = o_\P(1)\ \text{or}\ \|\hat m - m\| = o_\P(1)\}
\, \text{and} \, \|\hat k - k\| = o_\P(1)
\, \text{and} \, \|\hat f - f\| = o_\P(1)
\, \text{and} \, \|\hat {e}_A - e_A\| = o_\P(1) \, 
\text{and} \, \|k^* - k\| = o_\P(1)
\]

\[
\text{(iv)}\quad 
\{\|\hat r_S - r_S^*\| = o_\P(1)\ \text{or}\ \|\hat m - m\| = o_\P(1)\}
\, \text{and} \, \|\hat k - k\| = o_\P(1)
\, \text{and} \, \|\hat f - f\| = o_\P(1)
\, \text{and} \, \|k^* - k\| = o_\P(1)
\, \text{and} \, \|\hat g - g\| = o_\P(1).
\]
}

Finally, note that the last term in $R_\U(\gamma, \hat{\gamma})$ involves $\hat{t} - t$, where $\hat{t}$ is simply the empirical average of $S = 0$. By the weak law of large numbers,  we have that $\hat{t} - t = o_\P(1)$. Observing the above conditions for $R_\U(\gamma, \hat{\gamma}) = o_\P(1)$, it is important to note that all conditions require either consistent estimation of the outcome regression $m$ or require $\hat{r}_S$ to converge to $r^*_S$ in probability.

\subsection{Additional considerations on estimation for $\tau$}

 The collaborative estimator $\tilde \tau$ builds on methods originally developed for the non-transported ATE by \citet{benkeser2020nonparametric}. Its key feature is the collaborative estimation of the selection score and the outcome regression, designed to mitigate the challenges posed by the strong positivity assumption and practical positivity violations \citep{gruber2010application, stitelman2010collaborative, wang2011finding}. By estimating the selection score as a function of the outcome model, $\tilde \tau$ can achieve super-efficiency---attaining smaller asymptotic variance than regular efficient estimators---while also improving robustness to practical positivity violations and maintaining theoretical guarantees for its asymptotic distribution. However, $\tilde \tau$ suffers from regularity issues: it is an irregular estimator, and its nominal confidence intervals may undercover the true estimand, particularly in the presence of positivity violations and smaller sample sizes \citep{benkeser2020nonparametric}. Moreover, the practical implications of its non-regularity remain insufficiently understood, as noted by \citet{rudolph2025improving}.

\section{Simulation studies}
\label{sec:simulation}
In this section, we present the results of a numerical study conducted with the following objectives: 1) to ensure correct implementation of our proposed estimators, 2) to illustrate under which data-generating mechanisms (DGMs) the proposed estimators achieve observable efficiency gains, and 3) to illustrate and compare the finite-sample performance across sample sizes, DGMs, and estimators. We compare the three one-step estimators introduced in previous sections: $\tilde\theta$, which builds on the estimator of \citet{rudolph2017robust}, and our two proposed estimators $\tilde\lambda$ and $\tilde\tau$. We consider three DGMs as shown in Table \ref{tab:DGMs_site} to show no efficiency gain, efficiency gain for $\lambda$, and efficiency gain for $\tau$, respectively. \\

DGM 1 is constructed under a scenario where we do not expect an efficiency gain of our proposed estimators over $\tilde \theta$. To construct this scenario in DGM 1, we have two covariates $W_1$ and $W_2$ and specify $V$ as $W_2$ while letting $W_1$ depend only on $W_2$. Therefore, we have $S \indep W_1 \mid  V$ when $V = W_2$ so that \ref{ass:modif} is satisfied. We use this DGM to demonstrate our discussion made after Lemma \ref{lemma:gains}, where even though \ref{ass:modif} is satisfied, every downstream variable carries through effect through $W_2$. Hence taking $V = W_2$ the component $f(W) - \E[f(W) \mid V] $ can be close to zero where no efficiency gain is expected. \\

DGM 2 is constructed to demonstrate the efficiency gain for $\lambda$ over $\theta$. Having \ref{ass:modif} holds in this DGM, we ensure that $V = W_2$ is not predictive of $f(W)$ and thus $[f(W) - \E\{f(W)|V\}]$ is large, as discussed in  Lemma \ref{lemma:gains}. DGM 2 introduces interactions between $W_1$ and $A$ in $Y$, creating a more complex relationship among variables other than $V$ so that $V$ by itself can be less predictive of the CATE. An efficiency gain for $\lambda$ is expected in the simulation of finite samples under DGM 2, as we expect a greater difference for the EIF variances compared to $\theta$.  \\
 
DGM 3 is designed to favor $\tilde \tau$ in terms of efficiency. Although $W_1$ and $W_2$ strongly distinguish the source and target populations, they are weak predictors of the outcome. As a result, the full selection mechanism $\P(S=1\mid A,W,X)$ may be highly dispersed, leading to unstable estimation. In contrast, the collaborative estimator $\tilde \tau$ uses the reduced selection mechanism $\P(S=1\mid m(A,W,X))$. This dimension reduction removes variation in the selection mechanism unrelated to the outcome while retaining information carried by outcome-relevant variables such as $M$, $Z$, $W_3$, and $W_4$. Because the conditional variance of $Y$ is also larger when $M=1$, smoothing the selection probabilities while retaining information through $M$ can yield more stable weights and lower variance for the estimator of $\tilde \tau$. \\

For each DGM, we generated 1000 datasets of sizes $n \in \{500, 1000, 5000, 10000\}$. All estimations of nuisance parameters are obtained using the Super Learner algorithm \citep{vanderLaanPolleyHubbard07}, including a simple mean predictor (SL.mean), standard generalized linear models (SL.glm), penalized generalized linear models (SL.glmnet) \citep{tibshirani1996regression}, multivariate adaptive regression splines (SL.earth) \citep{earth}, and generalized additive models (SL.gam) \citep{Hastie&Tibshirani90}. The Super Learner algorithm uses cross-validation to evaluate the performance of each model and then estimates the optimal set of weights to combine their predictions. We evaluate the performance of our estimators in terms of absolute bias, absolute bias scaled by $\sqrt{n}$ ($\sqrt{n}$-bias), mean squared error scaled by sample size ($n$MSE), efficiency ratio relative to $\tilde \theta$, and coverage of the 95\% confidence interval. The results of the simulation studies are presented in Table~\ref{tab:sim_results_onestep}.\\

For DGM 1, all three estimators exhibit small absolute and $\sqrt{n}$-scaled biases across sample sizes, with empirical coverage remaining between 94\% and 96\%. Because the reported MSE is multiplied by $n$, the approximately constant $n$MSE values are consistent with the expected $n^{-1}$ convergence rate. Three estimators perform similarly in terms of efficiency across all sample sizes. \\

For DGM 2, the absolute biases remain small, and the $\sqrt{n}$-scaled biases are also modest across sample sizes. Empirical coverage ranges from 89\% to 96\%, where slightly lower coverage from $\tilde \tau$ can happen due to irregularity and unstable variance estimation \citep{benkeser2020nonparametric}. Potential solutions are discussed recently for improved sampling variance estimation \citep{ji2025targeted}. The estimator $\tilde \lambda$ consistently has 31\%--36\%  lower $n$MSE than $\tilde \theta$. These findings are consistent with the construction of DGM 2, thereby favoring the estimator $ \tilde \lambda$. The estimator $\tilde \tau$ also improves upon the efficiency of $\tilde \theta$, although its efficiency gains are smaller, with reductions in $n$MSE of approximately 1\%--7\%. \\

For DGM 3, all estimators exhibit small absolute bias and empirical coverage close to 95\% across sample sizes. The $n$MSE values remain relatively stable as $n$ increases. In this setting, $\tilde \theta$ and $\tilde \lambda$ have nearly identical efficiency, whereas $\tilde \tau$ shows a substantial and persistent efficiency advantage, with a 33\%--40\% improvement in $n$MSE relative to $\tilde \theta$. This result is consistent with the design of DGM 3, which was constructed to favor $ \tilde \tau$.

\begin{table}[ht]
\centering
\scriptsize
\begin{tabular}{ll}
\hline
\textbf{Variable} & \textbf{Distribution} \\
\hline
\multicolumn{2}{l}{\textbf{DGM 1}} \\
$S$     & $S \sim \text{Bernoulli}(0.58)$ \\
$W_2$   & $W_2 \sim \text{Bernoulli}(\text{logit}^{-1}(2S - 1))$ \\
$W_1$   & $W_1 \sim \text{Bernoulli}(\text{logit}^{-1}(2W_2 + 0.7))$ \\
$A$     & $A \sim \text{Bernoulli}(\text{logit}^{-1}(0.2S + 0.2W_1 + 0.2W_2 - 0.3))$ \\
$Z$     & $Z \sim \text{Bernoulli}(\text{logit}^{-1}(0.3A + 0.4W_1 + 0.3W_2 - 0.7))$ \\
$M$     & $M \sim \text{Bernoulli}(\text{logit}^{-1}(0.3A + 0.6W_1 + 0.25W_2 - 0.55S + \cos(Z) - 2))$ \\
$Y$     & $Y \sim  \mathcal{N}(0.4W_1 - 0.3A + 7M + 0.33Z + 0.92W_2,\, 0.5^2)$ \\
True ATE on $S = 1$ & 0.097 \\
True ATE on $S = 0$ & 0.145 \\
\hline
\multicolumn{2}{l}{\textbf{DGM 2}} \\
$S$     & $S \sim \text{Bernoulli}(0.5)$ \\
$W_2$   & $W_2 \sim \text{Bernoulli}(\text{logit}^{-1}(S - 0.5))$ \\
$W_1$   & $W_1 \sim \mathcal{N}(0,1)$ \\
$A$     & $A \sim \text{Bernoulli}(\text{logit}^{-1}(0.2S + 0.2W_1 + 0.3W_2))$ \\
$Z$     & $Z \sim \text{Bernoulli}(\text{logit}^{-1}(0.4A + 0.3W_1 - 0.2))$ \\
$M$     & $M \sim \text{Bernoulli}(\text{logit}^{-1}\left(0.3Z + 0.2W_1 + (6 - 2S)A - 3\right))$ \\
$Y$     & $Y \sim \mathcal{N}(0.5W_1 + 0.2W_2 + 0.5Z + 3M + A(3W_1 - 1) + 1,\,0.5^2)$ \\
True ATE on $S = 1$  & 1.156 \\
True ATE on $S = 0$ & 1.757 \\
\hline
\multicolumn{2}{l}{\textbf{DGM 3}} \\
$S$     & $S \sim \text{Bernoulli}(0.5)$ \\
$W_1$   & $W_1 \sim \text{Bernoulli}(\text{logit}^{-1}(1.4S - 0.2))$ \\
$W_2$   & $W_2 \sim \text{Bernoulli}(\text{logit}^{-1}(1.2S - 0.3))$ \\
$W_3$   & $W_3 \sim \text{Bernoulli}(\text{logit}^{-1}(0.10S - 0.1))$ \\
$W_4$   & $W_4 \sim \text{Bernoulli}(\text{logit}^{-1}(0.10S - 0.2))$ \\
$A$     & $A \sim \text{Bernoulli}(\text{logit}^{-1}(0.15S + 0.25W_3 + 0.20W_4 - 0.3))$ \\
$Z$     & $Z \sim \text{Bernoulli}(\text{logit}^{-1}(0.6A + 0.25W_3 + 0.20W_4 - 0.5))$ \\
$M$     & $M \sim \text{Bernoulli}(\text{logit}^{-1}(-0.8 + 0.4A + 0.25Z - 0.4S + 0.45W_3 + 0.45W_4 + 2A(1-S)))$ \\
$Y$     & $Y \sim  \mathcal{N}(0.8A + 2M + 0.4Z + 0.3W_3 + 0.25W_4,\, (0.5 + 0.6M)^2)$\\
True ATE on $S = 1$ & 1.062\\
True ATE on $S = 0$ & 1.777 \\
\hline
\end{tabular}
\caption{Data Generating Mechanisms with Site Heterogeneity}
\label{tab:DGMs_site}
\end{table}

\begin{table}[!htbp]
\scriptsize
\centering
\begin{tabular}{cc|ccc|ccc|ccc}
\toprule
\textbf{$n$} & \textbf{Metric} & \multicolumn{3}{c|}{\textbf{DGM 1}} & \multicolumn{3}{c|}{\textbf{DGM 2}} & \multicolumn{3}{c}{\textbf{DGM 3}} \\
\midrule
& & $\tilde\theta$ & $\tilde\lambda$ & $\tilde\tau$
& $\tilde\theta$ & $\boldsymbol{\tilde\lambda}$ & $\tilde\tau$
& $\tilde\theta$ & $\tilde\lambda$ & $\boldsymbol{\tilde\tau}$ \\
\midrule
\multirow{5}{*}{500} 
& $|\text{Bias}|$ 
& 0.006 & 0.006 & 0.007 
& 0.006 & 0.001 & 0.004 
& 0.000 & 0.004 & 0.005 \\
& $|\sqrt{n}\text{-Bias}|$ 
& 0.141 & 0.132 & 0.156 
& 0.123 & 0.013 & 0.082 
& 0.009 & 0.078 & 0.114 \\
& $n$MSE 
& 113 & 114 & 113 
& 25.7 & 16.8 & 25.6 
& 27.3 & 27.1 & 18.3 \\
& Efficiency ratio vs.\ $\theta$ 
& 1.000 & 1.010 & 1.000 
& 1.000 & 0.653 & 0.996 
& 1.000 & 0.992 & 0.669 \\
& 95\% C.I. Coverage 
& 94\% & 94\% & 95\% 
& 94\% & 94\% & 90\% 
& 93\% & 93\% & 92\% \\
\midrule
\multirow{5}{*}{1000} 
& $|\text{Bias}|$ 
& 0.016 & 0.015 & 0.016 
& 0.004 & 0.001 & 0.006 
& 0.002 & 0.005 & 0.003 \\
& $|\sqrt{n}\text{-Bias}|$ 
& 0.495 & 0.477 & 0.506 
& 0.133 & 0.030 & 0.190 
& 0.070 & 0.172 & 0.085 \\
& $n$MSE 
& 113 & 113 & 113 
& 26.8 & 18.4 & 26.5 
& 26.9 & 27.0 & 17.2 \\
& Efficiency ratio vs.\ $\theta$ 
& 1.000 & 1.000 & 0.997 
& 1.000 & 0.688 & 0.989 
& 1.000 & 1.000 & 0.640 \\
& 95\% C.I. Coverage 
& 95\% & 94\% & 94\% 
& 94\% & 94\% & 89\% 
& 95\% & 95\% & 95\% \\
\midrule
\multirow{5}{*}{5000} 
& $|\text{Bias}|$ 
& 0.003 & 0.002 & 0.002 
& 0.001 & 0.001 & 0.001 
& 0.001 & 0.004 & 0.003 \\
& $|\sqrt{n}\text{-Bias}|$ 
& 0.179 & 0.175 & 0.168 
& 0.086 & 0.103 & 0.077 
& 0.082 & 0.303 & 0.184 \\
& $n$MSE 
& 112 & 112 & 112 
& 25.2 & 16.2 & 24.7 
& 28.9 & 28.8 & 17.5 \\
& Efficiency ratio vs.\ $\theta$ 
& 1.000 & 1.000 & 0.998 
& 1.000 & 0.644 & 0.983 
& 1.000 & 0.998 & 0.605 \\
& 95\% C.I. Coverage 
& 95\% & 95\% & 95\% 
& 95\% & 95\% & 91\% 
& 95\% & 95\% & 95\% \\
\midrule
\multirow{5}{*}{10000} 
& $|\text{Bias}|$ 
& 0.001 & 0.001 & 0.001 
& 0.002 & 0.001 & 0.002 
& 0.001 & 0.004 & 0.001 \\
& $|\sqrt{n}\text{-Bias}|$ 
& 0.093 & 0.098 & 0.068 
& 0.162 & 0.113 & 0.165 
& 0.090 & 0.390 & 0.131 \\
& $n$MSE 
& 110 & 110 & 110 
& 21.9 & 14.7 & 20.3 
& 30.6 & 30.6 & 18.8 \\
& Efficiency ratio vs.\ $\theta$ 
& 1.000 & 0.999 & 0.999 
& 1.000 & 0.671 & 0.928 
& 1.000 & 0.999 & 0.617 \\
& 95\% C.I. Coverage 
& 96\% & 96\% & 95\% 
& 96\% & 96\% & 93\% 
& 95\% & 95\% & 94\% \\
\bottomrule
\end{tabular}
\caption{Simulation results for DGMs using the one-step estimator. Efficiency ratios are computed as $n\mathrm{MSE}(\cdot)/n\mathrm{MSE}(\tilde \theta)$, thus values below 1 indicate greater efficiency than $\tilde \theta$.}
\label{tab:sim_results_onestep}
\end{table}

\section{Application}
\label{sec:application}
\subsection{Moving to Opportunity Study (MTO)}
\subsubsection{Motivations}
We apply the estimators proposed herein to a subset of data from the Moving to Opportunity Study \citep{kling2007experimental,sanbonmatsu2011moving}.
MTO was a randomized trial where families living in public housing developments in five US cities could volunteer to be randomized to either receive a Section 8 housing voucher or not. Section 8 housing vouchers allow low-income families to have their rent subsidized to be no more than 30-40\% of their income on the private market \citep{hud} and this form of housing assistance continues to be the most common form of housing assistance today, with over 2.3 million families receiving this type of assistance in 2022 \citep{CBPP_HCV_web}. After randomization, families were followed for 10-15 years \citep{sanbonmatsu2011moving}. \\

We were interested in transporting the effect of randomized assignment to receive a Section 8 voucher or not on the risk of developing a mood disorder at the 10-15 year follow-up among participants' sons who identified as Black race or Latine ethnicity and who were 0-5 years old at randomization (and therefore adolescents at the 10-15 year follow-up point). We considered three of the MTO sites; we transported from the New York City (NYC) and Chicago sites to the Los Angeles (LA) site. We chose this particular subset, because the sites exhibited qualitative differences in effect estimates; in LA, the estimated additive effect of voucher receipt on risk of mood disorder at follow-up (the risk difference) was 0.056 (95\% CI: -0.011, 0.123; i.e., the voucher increased risk of mood disorder); in NYC and Chicago, the estimated additive effect of voucher receipt on risk of mood disorder at follow-up was -0.052 (95\% CI: -0.104, -0.001; i.e., the voucher decreased risk of mood disorder). When doing this transportation, we do not use outcome data from the LA site. However, in reality, we have this data, so we use the observed estimate from LA as a benchmarking exercise. 

\subsubsection{Measures}
Baseline covariates included race/ethnicity (categorized as Black or Latine), age of the child (at baseline), whether or not the child was reported to have behavioral problems at baseline or be enrolled in a gifted class at baseline, whether or not the parent: graduated high school, was ever married, was working, had a child when they were under age 18,
 was very dissatisfied with the baseline neighborhood, 
 and whether or not the household: received Aid to Families with Dependent Children (a federal welfare program at the time), had applied for Section 8 previously, moved 3 or more times in the previous 5 years, had a member with a disability, and size of household (2, 3, 4+). All baseline covariates had zero missingness, except for race/ethnicity, which had 2\% missingness. For the $\lambda$ parameter that relies on specifying $V$, we included: $V$ = \{Black race; whether or not the parent: graduated high school, was ever married; and whether or not the household: received Aid to Families with Dependent Children, moved 3 or more times in the previous 5 years, or had a member with a disability\}, which previous research suggested may explain some heterogeneity across sites \citep{rudolph2022ends,osypuk2024does,rudolph2025improving}.

The exposure was a randomized assignment to receive a Section 8 housing voucher. An initial important post-exposure variable that differed between the sites and also would affect the outcome was moving with the voucher or not. In Figure 1, this variable would be called $M$, and in Figure 2, this variable would be called $Z$. Assuming the setting depicted in Figure 2, we added a subsequent post-exposure variable related to school poverty over the duration of follow-up: whether or not the child had ever attended a non-Title I school (schools receive a Title I designation if they are considered ``high-poverty'',  meaning that a substantial number of students live below the poverty line). The outcome was presence or absence of a mood disorder (depression, bipolar (including subthreshold), and mania (including subthreshold), which was measured with World Health Organization Composite International Diagnostic Interview (CIDI) Version 3.0, which has been shown to align with Diagnostic Statistical Manual-IV (DSM-IV) diagnoses \citep{kessler2009national}. 

\subsubsection{Estimation} We used the proposed one-step estimators to estimate $\theta$, $\lambda$, and $\tau$ following Figure \ref{fig:dag} and Figure \ref{fig:dag_ic}. For the sake of simplicity in this illustrative example, we used data from a single imputed dataset and used 10-fold cross-fitting. In a real applied analysis, we would use multiple imputed datasets \citep{buuren2010mice}. We truncated the weights used in the EIF at 100.  We used the following algorithms in our stacked ensemble learner (i.e., Super Learner \citep{vanderLaanPolleyHubbard07}): an intercept-only model, a main-terms generalized linear model, LASSO \citep{tibshirani1996regression}, gradient boosted machines \citep{chen2015xgboost}, and multivariate adaptive regression splines \citep{earth}.

\subsubsection{Results}

Figure \ref{fig:mto_scm} shows the transported effect estimates ($\tilde\theta$, $\tilde\lambda$, and $\tilde\tau$) alongside the observed effect estimates in the NYC/Chicago (source) and LA (target) sites, separately for the two data generating mechanisms represented by Figure \ref{fig:dag} (panel a) and Figure \ref{fig:dag_ic} (panel b). Across both estimation settings, $\tilde\tau$ had a visibly narrower confidence interval than $\tilde\theta$ and $\tilde\lambda$, while $\tilde\lambda$ had a similar confidence interval to $\tilde\theta$, which corresponds to the third scenario of the data-generating process from the previous simulation. Note that all point estimates moved toward the direction of the effect observed in the target (LA) population. While there was a difference in point estimates between $\tilde\tau$ and the other two estimators, their confidence intervals overlapped substantially.

\begin{figure}[!htbp]
    \centering
    \captionsetup{width=\linewidth}

    \begin{subfigure}[b]{0.6\textwidth}
        \centering
        \includegraphics[width=\textwidth]{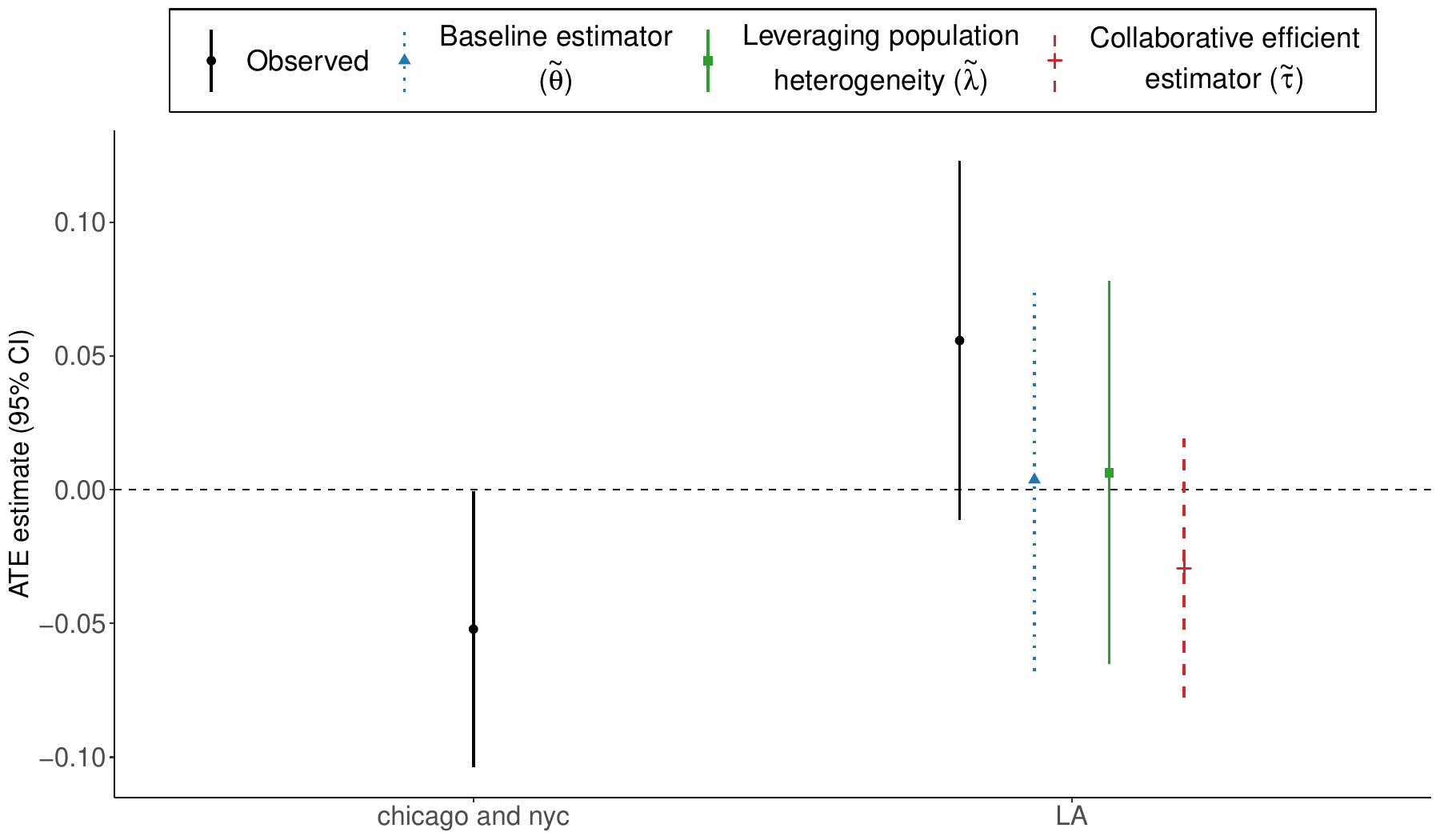}
        \caption{}
        \label{fig:mto_scm1}
    \end{subfigure}

    \vspace{0.3cm}

    \begin{subfigure}[b]{0.6\textwidth}
        \centering
        \includegraphics[width=\textwidth]{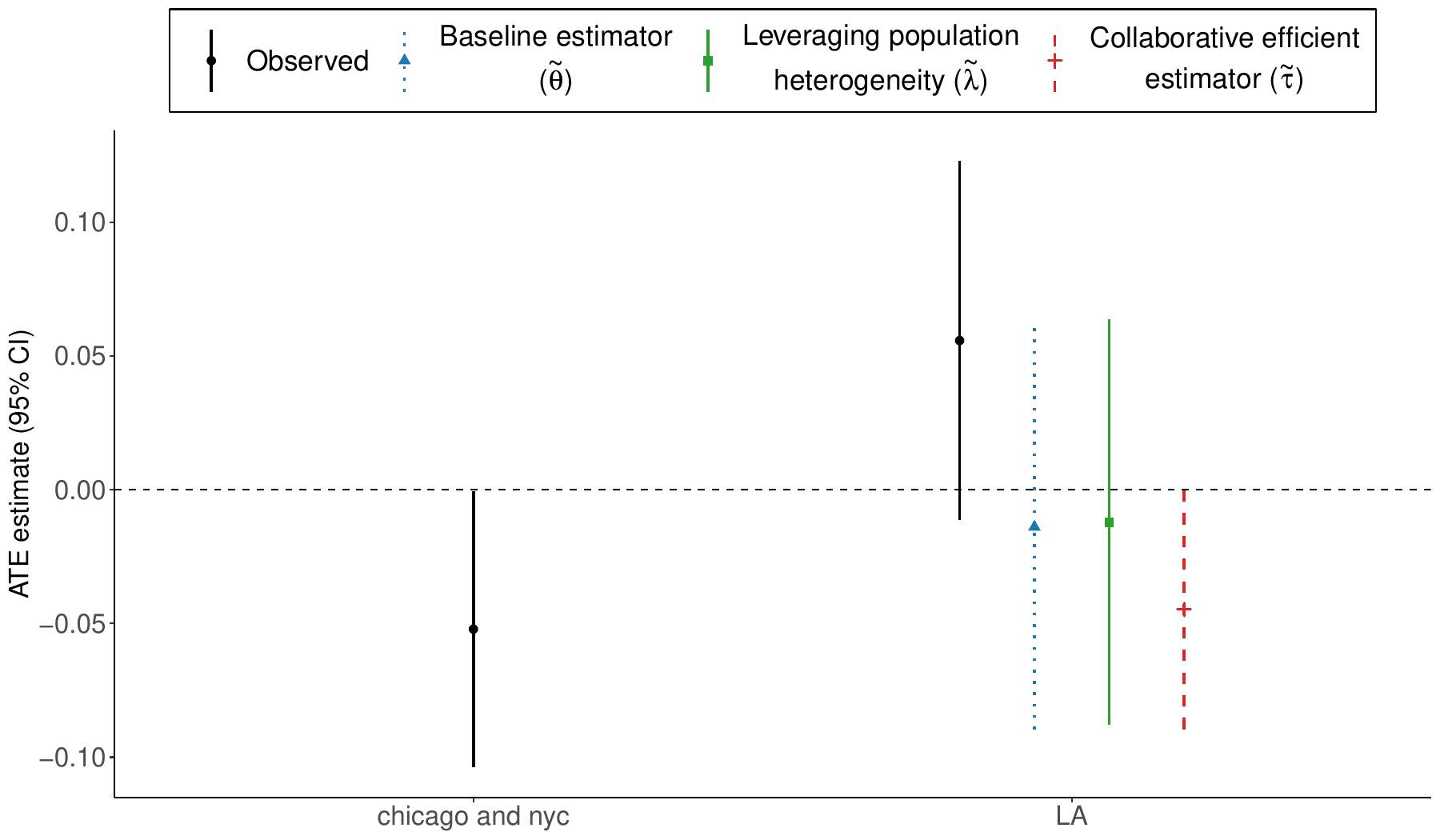}
        \caption{}
        \label{fig:mto_scm2}
    \end{subfigure}

    \caption{Effect estimates and 95\% confidence intervals using data from the MTO study. The ATE is interpreted as the effect of randomized assignment to receive a Section 8 housing voucher, versus not, on the risk of mood disorder among boys. The NYC and Chicago sites are considered the source population, and the LA site is considered the target population. We show results assuming data generated according to Figure \ref{fig:dag} (panel a) and Figure \ref{fig:dag_ic} (panel b). All results were approved for release by the U.S. Census Bureau, authorization number CBDRB-FY26-0336.}
    \label{fig:mto_scm}
\end{figure}

\subsection{The Aging and Cognitive Health Evaluation in Elders (ACHIEVE)}
\subsubsection{Motivations}
The Aging and Cognitive Health Evaluation in Elders (ACHIEVE) study is a multicenter randomized controlled trial of older adults with hearing loss, designed to evaluate the effect of a best-practice hearing intervention compared with a successful aging health education control on cognition change over a 3-year follow-up period (ClinicalTrials.gov Identifier: NCT03243422) \citep{lin2023hearing}. The trial enrolled participants from two distinct cohorts: a de novo community-based recruitment group and an existing cohort from the Atherosclerosis Risk in Communities (ARIC) study \citep{aric1989atherosclerosis}. Although the trial achieved all its recruitment and follow-up targets, the primary outcome results were null. However, a prespecified sensitivity analysis found that the hearing intervention reduced the rate of cognitive decline by 48\% among participants recruited from ARIC. ACHIEVE has demonstrated evidence of a treatment effect within the ARIC subgroup, with an estimated average treatment effect using one step estimator on global cognition score at year 3 of 0.168 (95\% CI: 0.005, 0.333; i.e. hearing intervention is more effective than the health education on participants' cognition). However, there was a null effect within the de novo group with an estimated average treatment effect using one step estimator of -0.019 (95\% CI: -0.096, 0.058; i.e. hearing intervention shows no beneficial effects over health education on participants' cognition). By transporting from the ARIC subgroup population to de novo subgroup population, we are interested in whether the transported estimate can be similar to the observed estimate and consistent with a null effects in de novo population. When doing this transportation, we do not use outcome data from the de novo group. This analysis would also provide evidence on the generalizability of observed hearing aid benefits among healthier population as in de novo group. However, we use the observed estimate from the de novo population as a benchmarking exercise. 

\subsubsection{Measures}
Baseline covariates included demographic characteristics (age, sex, race, education, former occupation, annual income); genetic characteristics (presence of APOE $\varepsilon4$ alleles); hearing-related measures (pure-tone average in the better-hearing ear, Hearing Handicap Inventory for the Elderly score, lifetime loud-noise exposure, and Quick Speech in Noise test score); anthropometric measures (body weight, height, body mass index, waist circumference, hip circumference, and waist-to-hip ratio); cardiovascular measures (systolic and diastolic blood pressure and resting heart rate); self-reported medical conditions (hypertension, diabetes, coronary heart disease, myocardial infarction, stroke, Parkinson’s disease, traumatic brain injury, and seizures); lifestyle factors (cigarette and alcohol use and leisure-time and sport-related physical activity); physical function measures (balance, repeated chair stands, 4-meter walking speed, and grip strength); functional status measures (functional limitations, activities of daily living, and instrumental activities of daily living); mental health (depressive symptoms); and cognitive measures, including the Mini-Mental State Examination, a 10-test cognitive battery, and factor scores derived from the 10-test battery. Details on these measures and corresponding references are included in the Appendix \ref{subsec:ACHIEVE variables}. Specific variables used to leverage heterogeneity across two populations include $V = \{$age, sex, race, education, annual income, systolic blood pressure, waist-to-hip circumference ratio, diabetes, alcohol use, summary score from balance tests, 4-meter walk, sport-related physical activity, Hearing Handicap Inventory for the Elderly, and the language domain cognitive factor score\}, which was suggested in previous analyses \citep{frangakis2009calibration, pike2025cognitive}. \\

Participants were randomized to receive a hearing intervention or the health education control. The primary outcome is a factor score of global cognition measured at year 3 \citep{lin2023hearing}. We let $M$ be the 2-year embedded social network measure assessed during follow-up visits. The social isolation factor has been demonstrated to have a potential mediating role on the causal pathway between the hearing intervention and cognitive impairment \citep{reed2025hearing}. We define a binary mediator by dichotomizing the embedded network score at the sample median. In addition, we let $Z$ be the 6-month Hearing Handicap Inventory for the Elderly—Screening (HHIE-S) score \citep{ventry1983identification, lichtenstein1988diagnostic}. Hearing handicap plays a crucial role in impacting a person’s social connection and communication in group and community settings, which in turn impacts their social embeddedness \citep{weinstein2016relating, jones2019walk, hoang2022interventions, reed2025hearing}. We define a binary indicator that contrasts no hearing handicap with some degree of hearing handicap, consisting of people with mild–moderate or severe hearing handicap. Missing data were addressed using multiple imputation \citep{demirtas2018flexible} following the order of our structural causal model in Figure~\ref{fig:dag_ic}.

\subsubsection{Estimation} We used the proposed one-step estimators to estimate $\theta$, $\lambda$, and $\tau$ following Figure \ref{fig:dag_ic} through 10-fold cross-fitting. We used the following algorithms in our stacked ensemble learner (i.e., Super Learner \citep{vanderLaanPolleyHubbard07}): an intercept-only model, a main-terms generalized linear model, LASSO \citep{tibshirani1996regression}, gradient-boosted machines \citep{chen2015xgboost}, and multivariate adaptive regression splines \citep{earth}.

\subsubsection{Results}

\begin{figure}[!htbp]
    \centering
    \captionsetup{width=\linewidth}
    \includegraphics[width=0.6\textwidth]{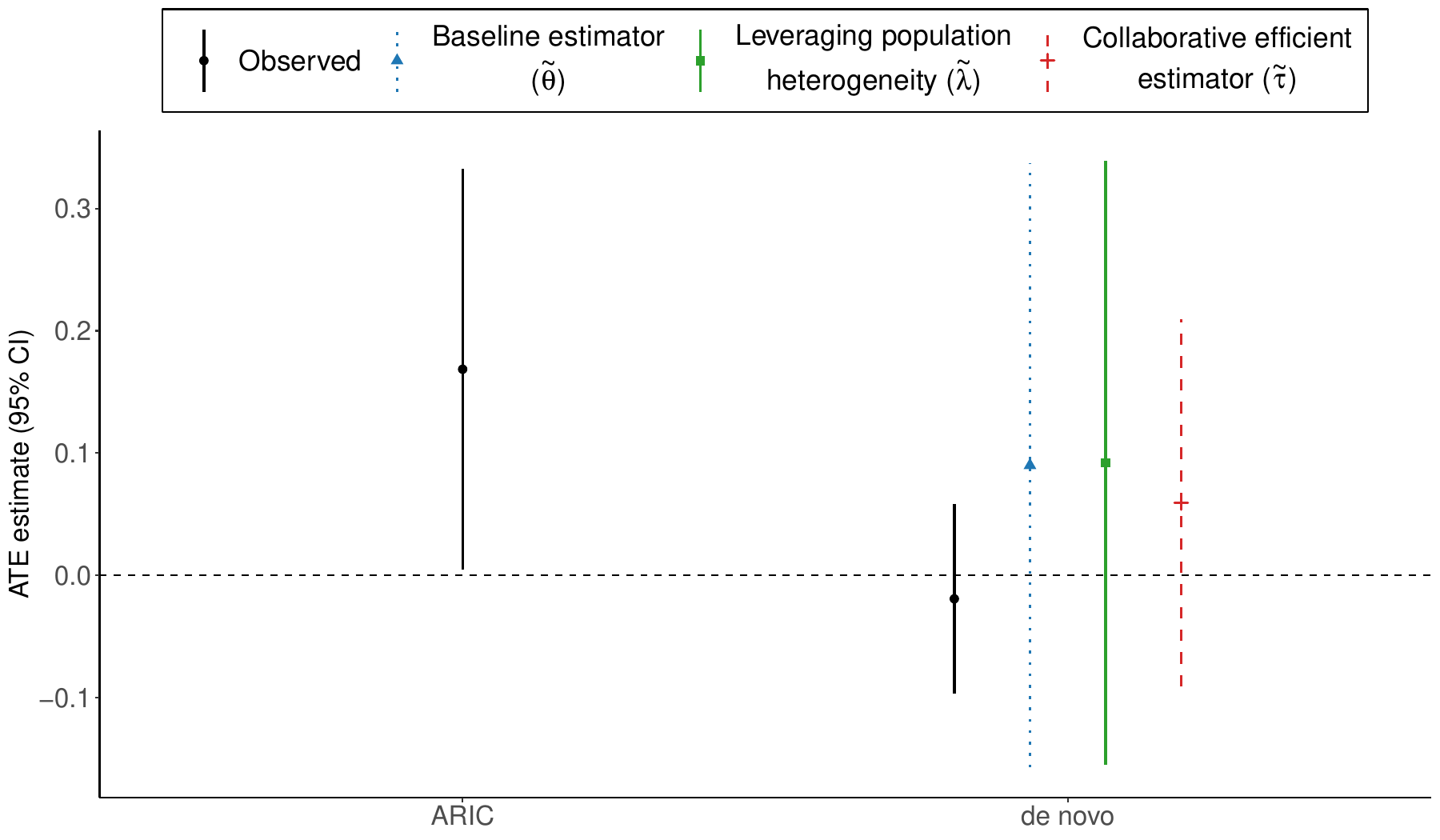}
    \caption{Effect estimates and 95\% confidence intervals using data from the ACHIEVE clinical trial. The ATE is interpreted as the effect of hearing intervention on the 3-year global cognition score compared with health education control. ARIC group is considered as the source population and de novo group is considered as the target population in this example. We show results assuming data generated according to Figure \ref{fig:dag_ic}.}
    \label{fig:achieve_res}
\end{figure}

The effect estimates of receiving a hearing intervention vs. health education on the global cognition score transported from the ARIC to the de novo group are shown in Figure \ref{fig:achieve_res} along with the associated 95\% confidence intervals. Transported point estimates are in the direction of the observed de novo group point estimate, and we are able to obtain consistent the null effects within the de novo group.  Estimates of the $\tilde \lambda$ parameter have similar confidence intervals compared to the existing $\tilde \theta$ parameter; estimates of the $\tilde \tau$ parameter have a narrower confidence interval instead. This result can be explained by simulation for DGM 3, where using the summarized statistics of covariates in $\tilde \tau$ helps with the covariate dimension reduction. The selected subsets of variables in $V$ could be strong predictors of the CATE; thus $\tilde \lambda$ did not show a significant improvement in efficiency over $\tilde \theta $, as reflected in the simulation study and the discussion under Lemma $\ref{lemma:gains}$.

\section{Discussion}
\label{sec:conclusion}
This article develops novel estimators for transported effects in the presence of mediating variables, demonstrating efficiency gains under knowledge of the reasons for study heterogeneity, and introducing a collaborative estimator that can be used when that knowledge is absent. In the presence of multivariate, continuous mediators, our proposed estimators require numerical integration. Proposing methods that are computationally feasible in such cases likely requires further advancements in the estimation of the transported CATE. Another direction of future research is to investigate the lack of regularity of the proposed collaborative estimator and to establish its practical consequences.

\section{Acknowledgments}
Xinyi Zhang, Iván Díaz, and Michele Santacatterina were supported by the National Institute of Allergy and Infectious Diseases of the National Institutes of Health under award number R01AI197146. Kara Rudolph and Iván Díaz were supported by the National Institute on Drug Abuse of the National Institutes of Health under award number R01DA056407. The content is solely the responsibility of the authors and does not necessarily represent the official views of the National Institutes of Health.\\

This research was conducted as a part of the U.S. Census Bureau’s Evidence Building Project Series. Any opinions and conclusions expressed herein are those of the author and do not represent the views of the U.S. Census Bureau. The Census Bureau has ensured appropriate access and use of confidential data and has reviewed these results for disclosure avoidance protection (CBDRB-FY26-0336). \\

We thank members of ACHIEVE Collaborative Research Group for input and suggestions. Members of the ACHIEVE Collaborative Research Group are listed at www.achievestudy.org. The investigators thank the participants and staff of the ACHIEVE and ARIC studies for their important contributions and dedication to the study.\\

This manuscript is the result of funding in whole or in part by the National Institutes of Health (NIH). It is subject to the NIH Public Access Policy. Through acceptance of this federal funding, NIH has been given a right to make this manuscript publicly available in PubMed Central upon the Official Date of Publication, as defined by NIH. The content is solely the responsibility of the authors and does not necessarily represent the official views of the NIH.\\

The ACHIEVE trial was supported by the National Institute on Aging (NIA; R01AG055426), with magnetic resonance imaging of the brain funded by R01AG060502, and with previous pilot study support from the NIA (R34AG046548) and the Eleanor Schwartz Charitable Foundation, in collaboration with the ARIC Study, supported by National Heart, Lung, and Blood Institute (NHLBI) contracts 
(75N92022D00001, 75N92022D00002, 75N92022D00003, 75N92022D00004, 

75N92022D00005). ARIC neurocognitive data were collected using US National Institutes of Health grants (NHLBI, National Institute of Neurological Disorders and Stroke, NIA, and National Institute of Deafness and Other Communication Disorders; U01HL096812, U01HL096814, U01HL096899, U01HL096902, and U01HL096917).\\

Hearing aids, hearing assistive technologies, and related materials used in the ACHIEVE Study were provided at no cost to the researchers or the participants by Sonova/Phonak LLC.\\

Neither the funders of the study nor the sponsoring manufacturer had no role in study design, data collection, data analysis, data interpretation, or writing of the report.

\paragraph{Code availability.}
The code for implementing the proposed estimators is available at
\url{https://github.com/XinyiEmilyZhang/Trans_Mediator_Efficiency}.

\newpage
\bibliographystyle{plainnat}
\bibliography{refs}

\clearpage
\appendix

\begin{bibunit}[plainnat]

\section{Appendix}
\subsection{Identification of $\lambda$}
\label{subsec:identification}

\begin{proof}
We identify our estimand $    \E\left[Y(1) - Y(0)\right | S = 0]$ under non-parametric structural equation models, using the assumptions referenced below. For each $a \in \{0,1\}$,
\begin{align*}
    \E[Y(a) | S = 0] 
    &= \E\left[\E\left[Y(a) | W, S = 0\right] | S = 0 \right]\\
    &= \E\left[\E\left[Y(a) | A = a, W, S = 0\right] | S = 0 \right]&&\text{by \ref{ass:A3}\ref{ass:A3_exch} \& \ref{ass:positivity}\ref{ass:pos_A}} \\
    &= \E\left[\E\left[Y | A = a, W, S = 0\right] | S = 0 \right] &&\text{by consistency}\\
    &=\E\left[\E\left[\E\left[Y | A = a, W, M, Z, S = 0\right] | A = a, W, S = 0 \right]| S = 0 \right] \\
    &= \E\left[\E\left[\E\left[Y | A = a, W, M, Z, S = 1\right] | A = a, W, S = 0 \right]| S = 0 \right] &&\text{by \ref{ass:A3}\ref{ass:A3_S} \& \ref{ass:positivity}\ref{ass:pos_X}} \\
    &=\E\left[h(a,W) | S = 0\right].
\end{align*}

Therefore, 
\begin{align*}
    \lambda = \E\left[Y(1) - Y(0)\right | S = 0] 
    &= \E\left[h(1,W) - h(0,W) | S = 0\right] \\
    &= \E\left[f(W) | S = 0\right] \\
    &= \E\left[\E\left[f(W) |V, S = 0\right] | S = 0\right] \\
    &= \E\left[\E\left[f(W) |V \right] | S = 0\right] &&\text{by \ref{ass:modif}},
\end{align*}
which establishes the identification result.

\end{proof}
\subsection{Identification of $\lambda$ under \ref{ass:A5}} \label{subsec:identificationA5}
\begin{proof}
We can identify $\lambda$ using assumptions in \ref{ass:A5}, which establish identification equivalence as in \ref{ass:A3}. For each $a \in \{0,1\}$,
\begin{align*}
\E[Y(a)\mid S=0] 
&= \E\!\left[\E[Y(a)\mid W,M(a), Z(a), S=0]\mid S=0\right] \\
&= \E\!\left[\E[Y(a)\mid W,M(a), Z(a), S=1]\mid S=0\right]&&\text{by \ref{ass:A5}\ref{ass:A5_S}}\\ 
&= \int \E[Y(a)\mid W=w,M(a)=m, Z(a)=z, S=1]
\\ & \qquad\dd\P(M(a)=m, Z(a)=z\mid W=w, S=0)\dd\P(W=w\mid S=0)\\
&= \int \E[Y\mid A=a, W=w,M=m, Z=z, S=1]
\\ & \qquad \dd\P(M=m, Z=z\mid A=a, W=w, S=0)\dd\P(W=w\mid S=0) \\ & \qquad\qquad \text{by \ref{ass:A5}\ref{ass:A5_exch} \& consistency}
\\
&= \E[h(a,W)\mid S = 0].
\end{align*}

The second half of the proof completes the identification of $\lambda$ following the same steps as Appendix \ref{subsec:identification}.
\end{proof}

\subsection{von Mises expansion for $\lambda$}
\label{subsec:vonmise_lambda}
\begin{proof}\allowdisplaybreaks
\begin{align}
    \hat\p\times\E[&\D_\lambda(O;\hat\eta)] \notag\\ 
    &= \E\bigg[\frac{\{A-\hat e_A(W)\}}{\hat e_A(W)\{1-\hat e_A(W)\}}\{1-\hat q_S(A,W,X)\}\bigg\{\frac{S}{\hat q_S(A,W,X)}-1\bigg\}\{m(A,W,X) - \hat m(A,W,X)\}\bigg]\label{eq:r1}\\
    &+\E\bigg[\frac{\{A-\hat e_A(W)\}}{\hat e_A(W)\{1-\hat e_A(W)\}}\{1-\hat q_S(A,W,X)\}\{m(A,W,X) - \hat m(A,W,X)\}\bigg]\notag\\
    &+\E\bigg[\frac{\{A-\hat e_A(W)\}}{\hat e_A(W)\{1-\hat e_A(W)\}}\{1-S\}\{\hat m(A, W, X) - m(A, W, X)\}\bigg]\notag\\
        &+\E\bigg[\frac{\{A-\hat e_A(W)\}}{\hat e_A(W)\{1-\hat e_A(W)\}}\{1-S\}\{m(A, W, X) - \hat h(A, W)\}\bigg]\notag\\
        &+\E\big[(1-\hat e_S(V))\{\hat f(W) - \hat l(V)\}\big]\notag\\
                &+\E\big[(1-S)\{\hat l(V) - l(V)\}\big]\notag\\
        &=(\ref{eq:r1}) \notag\\
        &+\E\bigg[\frac{\{A-\hat e_A(W)\}}{\hat e_A(W)\{1-\hat e_A(W)\}}\{S-\hat q_S(A,W,X)\}\{m(A,W,X) - \hat m(A,W,X)\}\bigg]\label{eq:r2}\\
                &+\E\bigg[\frac{\{A-\hat e_A(W)\}}{\hat e_A(W)\{1-\hat e_A(W)\}}\{1-S\}\{h(A, W) - \hat h(A, W)\}\bigg]\notag\\
        &+\E\big[(1-\hat e_S(V))\{\hat f(W) - \hat l(V)\}\big]\notag\\
                &+\E\big[(1-e_S(V))\{\hat l(V) - l(V)\}\big]\notag\\
        &=(\ref{eq:r1})+(\ref{eq:r2})+\notag\\
                        &+\E\bigg[\frac{\{A-\hat e_A(W)\}}{\hat e_A(W)\{1-\hat e_A(W)\}}\{1-S\}\{A\times f(W) + g(W) - A\times \hat f(W) - \hat g(W)\}\bigg]\notag\\
                    &+\E\big[(1-\hat e_S(V))\{\hat f(W) - \hat l(V)\}\big]\notag\\
                &+\E\big[(1-e_S(V))\{\hat l(V) - l(V)\}\big]\notag\\
        &=(\ref{eq:r1})+(\ref{eq:r2})+\notag\\
                        &+\E\bigg[\frac{\{A-\hat e_A(W)\}}{\hat e_A(W)\{1-\hat e_A(W)\}}\{1-S\}\{A\times f(W) + g(W) - A\times \hat f(W) - \hat g(W)\}\bigg]\notag\\
                &+\E\big[(\hat e_S(V)-e_S(V))\{\hat l(V) - l(V)\}\big]\label{eq:r21}\\
                                                &+\E\big[(e_S(V)-\hat e_S(V))\{\hat f(W) - f(W)\}\big]\label{eq:r22}\\
                                                &+\E\big[(e_S(V)-\hat e_S(V))\{f(W) - l(V)\}\big]\notag\\                              
        &+\E\big[(1-S)\{\hat f(W) - \E(f(W) \mid V)\}\big]\notag\\
                &=(\ref{eq:r1})+(\ref{eq:r2})+(\ref{eq:r21})+(\ref{eq:r22})\notag\\
                        &+\E\bigg[\frac{A}{\hat e_A(W)}\{1-S\}\{f(W) - \hat f(W) + g(W) - \hat g(W)\}\bigg]\notag\\
                                                &+\E\bigg[-\frac{1-A}{1-\hat e_A(W)}\{1-S\}\{g(W) - \hat g(W)\}\bigg]\notag\\
       &+\E\big[(e_S(V)-\hat e_S(V))\{\E[f(W)\mid V] - l(V)\}\big] \notag\\ 
        &+\E\big[(1-S)\{\hat f(W) - f(W)\}\big]\notag\\
    &=(\ref{eq:r1})+(\ref{eq:r2})+(\ref{eq:r21})+(\ref{eq:r22})\notag\\
    &+\E\bigg[\bigg\{\frac{A}{\hat e_A(W)}-1\bigg\}\{1-S\}\{f(W) - \hat f(W)\}\bigg]\label{eq:r3}\\
    &+ \E\bigg[\bigg\{-\frac{1-A}{1-\hat e_A(W)}+\frac{A}{\hat e_A(W)}\bigg\}\{1-S\}\{g(W) - \hat g(W)\}\bigg].\label{eq:r4}
\end{align}

Thus we have $R_\D(\eta,\hat{\eta}) = (\ref{eq:r1})+(\ref{eq:r2})+(\ref{eq:r21})+(\ref{eq:r22})+ (\ref{eq:r3}) + (\ref{eq:r4})$.

Assume there exists $\delta > 0$ such that $\delta<\hat e_A(W)<1-\delta$ and $\delta<\hat q_S(A,W,X)<1-\delta$. We have
    \[\P\bigg[\bigg|\frac{A-\hat e_A(W)}{\hat e_A(W)\{1-\hat e_A(W)\}}\bigg|\leq \epsilon^{-1} \bigg]=1\]
for some $\epsilon > 0$. Conditioning on $(A,W,X)$ in (\ref{eq:r1}) replaces $S$ by $q_S(A,W,X)$, so that
(\ref{eq:r1}) is bounded in absolute value by
\[
\epsilon^{-1}\E\bigg[\bigg|\frac{q_S(A,W,X)}{\hat q_S(A,W,X)}-1\bigg|\,\big|m(A,W,X) - \hat m(A,W,X)\big|\bigg]
\]




Similarly, conditioning on $(A,W,X)$ in (\ref{eq:r2}) replaces $S$ by $q_S(A,W,X)$, so that
(\ref{eq:r2}) is bounded in absolute value by
\[
\epsilon^{-1}\E\Big[\big|q_S(A,W,X) - \hat{q}_S(A,W,X)\big|\,\big|m(A,W,X) - \hat m(A,W,X)\big|\Big]
\]



By similar arguments, (\ref{eq:r3}) and (\ref{eq:r4}) are bounded by

\[\epsilon^{-1}\E\bigg[\bigg|\frac{e_A(W)}{\hat e_A(W)}-1\bigg|\,\big|f(W) - \hat f(W)\big|\bigg],\]
and
\[\epsilon^{-1}\E\bigg[\bigg|-\frac{1-e_A(W)}{1-\hat e_A(W)}+\frac{e_A(W)}{\hat e_A(W)}\bigg|\,\big|g(W) - \hat g(W)\big|\bigg].\]



The theorem follows after noticing that
\begin{align}
\E[\D_{\lambda}(O;\hat\eta) - \D_{\hat\lambda}(O;\hat\eta)] &=(\hat\lambda-\lambda)\notag\\
&+\left\{\frac{\p}{\hat\p}-1\right\}(\hat\lambda-\lambda)\label{eq:rlambda}.
\end{align}
\end{proof}

\subsection{Efficiency gains for $\lambda$}
\label{subsec:eff_gain}
\begin{proof} Since $\D_\lambda(O;\eta)$ and $\D_\theta(O;\xi)$ differ only in their final two terms, it suffices to compare the variances of those differing components.

\begin{align*}
\P(S = 0)^2\Big[\var\left\{\D_\lambda(O;\eta)\right\} - \var\left\{\D_\theta(O;\xi)\right\}\Big] 
&= \var\Big[\P(S=0 \mid V)\{f(W) - \E(f(W) \mid V)\}\Big] \\
&\quad + \var\Big[\one\{S=0\} \{\E(f(W) \mid V) - \lambda\}\Big] \\
&\quad - \var\Big[\one\{S=0\} (f(W) - \theta)\Big].
\end{align*}

For the first term, let
\[
R := \P(S=0 \mid V)\{f(W) - \E\{f(W)\mid V\}\}.
\]
Then
\[
\E(R)
= \E\left[\E\big\{R\mid V\big\}\right]
= \E\left[\P(S=0\mid V)\E\big\{f(W)-\E\{f(W)\mid V\}\mid V\big\}\right]
=0.
\]
Therefore,
\[
\var(R)=\E(R^2)
= \E\Big[\P(S=0\mid V)^2\{f(W)-\E\{f(W)\mid V\}\}^2\Big].
\]

Next, since $\theta=\E\{f(W)\mid S=0\}$,
\[
\E[\one\{S=0\}(f(W)-\theta)] = 0,
\]
and so
\begin{align*}
\var\Big[\one\{S=0\}(f(W)-\theta)\Big]
&= \E\Big[\one\{S=0\}(f(W)-\theta)^2\Big] \\
&= \E\left[\E\Big\{\one\{S=0\}(f(W)-\theta)^2\mid V\Big\}\right] \\
&=\E\Big[\P(S=0\mid V)\E\{(f(W)-\theta)^2\mid V\}\Big],
\end{align*}

where the last equality follows from $S\indep W\mid V$ under Assumption~\ref{ass:modif}.

Similarly, 
\[
\var\Big[\one\{S=0\}\{\E\{f(W)\mid V\}-\lambda\}\Big]
=
\E\Big[\P(S=0\mid V)\{\E\{f(W)\mid V\}-\theta\}^2\Big].
\]

Now use the conditional bias-variance decomposition
\[
\E\{(f(W)-\theta)^2\mid V\}
=
\var\{f(W)\mid V\} + \{\E\{f(W)\mid V\}-\theta\}^2.
\]
Therefore,
\begin{align*}
&\var\Big[\one\{S=0\}\big\{\E\{f(W)\mid V\}-\lambda\big\}\Big]
- \var\Big[\one\{S=0\}\big\{f(W)-\theta\big\}\Big] \\
&= - \E\Big[\P(S=0\mid V)\var\{f(W)\mid V\}\Big] \\
&= - \E\Big[\P(S=0\mid V)\E\big[\big\{f(W)-\E\{f(W)\mid V\}\big\}^2 \mid V\big]\Big] \\
&=  - \E\Big\{\P(S=0\mid V)[f(W)-\E\{f(W)\mid V\}]^2\Big\}.
\end{align*}

Combining terms,
\begin{align*}
&\P(S = 0)^2\Big[\var\{\D_\lambda(O;\eta)\} - \var\{\D_\theta(O;\xi)\}\Big] \\
&= \E\Big\{\P(S=0\mid V)^2[f(W)-\E\{f(W)\mid V\}]^2\Big\}
 - \E\Big\{\P(S=0\mid V)[f(W)-\E\{f(W)\mid V\}]^2\Big\} \\
&= \E\Big\{\P(S=0\mid V)\{\P(S=0\mid V)-1\}[f(W)-\E\{f(W)\mid V\}]^2\Big\} \le 0.
\end{align*}

\end{proof}

\subsection{von Mises expansion for $\tau$}
\label{subsec:vonmise_tau}
\begin{proof}\allowdisplaybreaks
\begin{align}
    \hat\p\times\E[&\U_\tau(O;\hat\gamma)] \notag\\ 
    &= \E\bigg[\frac{\{A-\hat e_A(W)\}}{\hat e_A(W)\{1-\hat e_A(W)\}}\{1-\hat r_S(A,W,X)\}\bigg\{\frac{S}{\hat r_S(A,W,X)}-1\bigg\}\{m(A,W,X) - \hat m(A,W,X)\}\bigg]\label{eq:r10}\\
    &+\E\bigg[\frac{\{A-\hat e_A(W)\}}{\hat e_A(W)\{1-\hat e_A(W)\}}\{1-\hat r_S(A,W,X)\}\{m(A,W,X) - \hat m(A,W,X)\}\bigg]\notag\\
    &+\E\bigg[\frac{\{A-\hat e_A(W)\}}{\hat e_A(W)\{1-\hat e_A(W)\}}\{1-S\}\{\hat m(A, W, X) - m(A, W, X)\}\bigg]\notag\\
        &+\E\bigg[\frac{\{A-\hat e_A(W)\}}{\hat e_A(W)\{1-\hat e_A(W)\}}\{1-S\}\{m(A, W, X) - \hat h(A, W)\}\bigg]\notag\\
        &+\E\big[(1-\hat e_S(W))\{\hat f(W) - \hat k(W)\}\big]\notag\\
                &+\E\big[(1-S)\{\hat k(W) - k(W)\}\big]\notag\\
        &=(\ref{eq:r10}) \notag\\
        &+\E\bigg[\frac{\{A-\hat e_A(W)\}}{\hat e_A(W)\{1-\hat e_A(W)\}}\{S-\hat r_S(A,W,X)\}\{m(A,W,X) - \hat m(A,W,X)\}\bigg]\label{eq:r11}\\
                &+\E\bigg[\frac{\{A-\hat e_A(W)\}}{\hat e_A(W)\{1-\hat e_A(W)\}}\{1-S\}\{h(A, W) - \hat h(A, W)\}\bigg]\notag\\
        &+\E\big[(1-\hat e_S(W))\{\hat f(W) - \hat k(W)\}\big]\notag\\
                &+\E\big[(1-e_S(W))\{\hat k(W) - k(W)\}\big]\notag\\
        &=(\ref{eq:r10})+(\ref{eq:r11})+\notag\\
                        &+\E\bigg[\frac{\{A-\hat e_A(W)\}}{\hat e_A(W)\{1-\hat e_A(W)\}}\{1-S\}\{A\times f(W) + g(W) - A\times \hat f(W) - \hat g(W)\}\bigg]\notag\\
                &+\E\big[(\hat e_S(W)-e_S(W))\{\hat k(W) - k(W)\}\big]\label{eq:r12}\\
                                                &+\E\big[(e_S(W)-\hat e_S(W))\{\hat f(W) - f(W)\}\big]\label{eq:r13}\\
                                                &+\E\big[(e_S(W)-\hat e_S(W))\{f(W) - k(W)\}\big]\notag\\                              
        &+\E\big[(1-S)\{\hat f(W) - f(W)\}\big]\notag\\
                &=(\ref{eq:r10})+(\ref{eq:r11})+(\ref{eq:r12})+(\ref{eq:r13})\notag\\
                        &+\E\bigg[\frac{A}{\hat e_A(W)}\{1-S\}\{f(W) - \hat f(W) + g(W) - \hat g(W)\}\bigg]\notag\\
                                                &+\E\bigg[-\frac{1-A}{1-\hat e_A(W)}\{1-S\}\{g(W) - \hat g(W)\}\bigg]\notag\\
       &+\E\big[(e_S(W)-\hat e_S(W))\{\E[f(W)\mid e_S(W),\hat e_S(W)] - k(W)\}\big]\label{eq:r14}\\ 
        &+\E\big[(1-S)\{\hat f(W) - f(W)\}\big]\notag\\
    &=(\ref{eq:r10})+(\ref{eq:r11})+(\ref{eq:r12})+(\ref{eq:r13})+(\ref{eq:r14})\notag\\
    &+\E\bigg[\E\bigg[\bigg\{\frac{A}{\hat e_A(W)}-1\bigg\}\{1-S\}\{f(W) - \hat f(W)\}\bigg | W\bigg]\bigg] \notag \\
    &+ \E\bigg[\E\bigg[\bigg\{-\frac{1-A}{1-\hat e_A(W)}+\frac{A}{\hat e_A(W)}\bigg\}\{1-S\}\{g(W) - \hat g(W)\}\bigg | W\bigg]\bigg]. \notag\\
    &= (\ref{eq:r10})+(\ref{eq:r11})+(\ref{eq:r12})+(\ref{eq:r13})+(\ref{eq:r14})\notag\\
    &+\E\bigg[\bigg\{\frac{e_A(W)}{\hat e_A(W)}-1\bigg\}\{1-S\}\{f(W) - \hat f(W)\}\bigg]\label{eq:r15}\\
    &+ \E\bigg[\bigg\{-\frac{1-e_A(W)}{1-\hat e_A(W)}+\frac{e_A(W)}{\hat e_A(W)}\bigg\}\{1-S\}\{g(W) - \hat g(W)\}\bigg].\label{eq:r16}
\end{align}

Next, define $R_\U(\gamma,\hat\gamma)=(\ref{eq:r10})+(\ref{eq:r11})+(\ref{eq:r12})+(\ref{eq:r13})+(\ref{eq:r14})+(\ref{eq:r15})+(\ref{eq:r16})$. By assumption ($\delta<\hat e_A(W)<1-\delta$ for $\delta>0$), we have 
\[\P\bigg[\bigg|\frac{A-\hat e_A(W)}{\hat e_A(W)\{1-\hat e_A(W)\}}\bigg|\leq \epsilon^{-1} \bigg]=1\]
for some $\epsilon$ so that (\ref{eq:r10}) is bounded above by 
  \[
  \epsilon^{-1}\E\bigg[\bigg|\bigg\{\frac{r_S^\star(A,W,X)}{\hat r_S(A,W,X)}-1\bigg\}\bigg| \bigg|\bigg\{m(A,W,X) - \hat m(A,W,X)\} \bigg|\bigg]
  \]
where we denoted $r_S^\star(a,w,x)=\P(S=1\mid \hat m(A,W,X) = \hat m(a,w,x), m(A,W,X) = m(a,w,x))$.
Following similar arguments, (\ref{eq:r11}) is bounded above by 

\[
\epsilon^{-1}\E\big[\big|\{r_S^\star(A,W,X)-\hat r_S(A,W,X)\}\big|\big|\{m(A,W,X) - \hat m(A,W,X)\}\big|\big].
\]

Similarly, (\ref{eq:r15}) and (\ref{eq:r16}) are bounded in absolute value by
\[\epsilon^{-1}\E\bigg[\bigg|\frac{e_A(W)}{\hat e_A(W)}-1\bigg|\,\big|f(W) - \hat f(W)\big|\bigg],\]
and
\[\epsilon^{-1}\E\bigg[\bigg|-\frac{1-e_A(W)}{1-\hat e_A(W)}+\frac{e_A(W)}{\hat e_A(W)}\bigg|\,\big|g(W) - \hat g(W)\big|\bigg].\]


The theorem follows after noticing that 
\begin{align}
\E[\U_{\tau}(O;\hat\gamma) - \U_{\hat\tau}(O;\hat\gamma)]&=\hat\tau-\tau\notag\\
&+\left\{\frac{\p}{\hat\p}-1\right\}(\hat\tau-\tau)\label{eq:rtau}.
\end{align}
\end{proof}

\subsection{Doubly robust transformation of $f(W)$}
\label{subsec:DR_CATE}

\begin{proof}
For $a \in \{0,1\}$, we define $h_1 := \E[h(1,w)]$ and $h_0 := \E[h(0,w)] $. Without loss of generality, we compute the influence function of $h_1$. We begin by deriving the influence function for $h(1,w)$:

\begin{align*}
& \mathbb{IF}\{h(1,w)\} \\ &= \sum_x \Bigg\{ \frac{\one(A = 1, S = 1, W = w, X = x)}{\P(A = 1, S = 1, W = w, X = x)} \Big[ Y - \E(Y | A = 1, S = 1, W = w, X = x) \Big] \\
&\qquad \times \P(X = x | A = 1, W = w, S = 0) \\
&\qquad + \E(Y | A = 1, W = w, X = x, S = 1) \times \frac{\one(A = 1, W = w, S = 0)}{\P(A = 1, W = w, S = 0)} \\
&\qquad \times \Big[ \one(X = x) - \P(X = x | A = 1, W = w, S = 0) \Big] \Bigg\} \\
&= \sum_x \Bigg\{ \frac{\one(A = 1, S = 1, W = w, X = x)}{\P(A = 1, S = 1, W = w)} \frac{\P(S = 0 | X = x, A = 1, W = w)}{\P(S = 1 | X = x, A = 1, W = w)} \frac{\P(S = 1, A = 1, W = w)}{\P(S = 0, A = 1, W = w)} \\
&\qquad \times \Big[ Y - \E(Y | A = 1, S = 1, W = w, X = x) \Big] \\
&\qquad + \E(Y | A = 1, W = w, X = x, S = 1) \\
&\qquad \times \frac{\one(A = 1, W = w, S = 0)}{\P(A = 1, W = w, S = 0)} \Big[ \one(X = x) - \P(X = x | A = 1, W = w, S = 0) \Big] \Bigg\} \\
&= \frac{\one(A = 1, S = 1, W = w)}{\P(A = 1, W = w, S = 0)} \frac{\P(S = 0 | X, A = 1, W = w)}{\P(S = 1 | X, A = 1, W = w)} \Big[ Y - \E(Y | A = 1, S = 1, W = w, X) \Big] \\
&\qquad + \frac{\one(A = 1, S = 0, W = w)}{\P(A = 1, S = 0, W = w)} \Big[ \E(Y | A = 1, S = 1, W = w, X) - h(1,w) \Big],
\end{align*}
and
\begin{align*}
  &\mathbb{IF}\{h_1\}\\
  &= \mathbb{IF}\{\E[h(1,w)]\} = \mathbb{IF}\left\{\sum_w h(1,w)p(w)\right\} = \sum_w \mathbb{IF}\{h(1,w)p(w)\} \\
  &= \sum_w \left\{ \mathbb{IF}\{h(1,w)\} p(w) + \mathbb{IF}\{p(w)\} h(1,w) \right\} \\
  &= \sum_w \Bigg\{\frac{\one(A = 1, S = 1, W = w) \times p(w)}{\P(A = 1, S = 0, W = w)}\frac{\P(S = 0 \mid X, A = 1, W=w)}{\P(S = 1 \mid X, A = 1, W = w)} \\
&\qquad \times [Y - \E(Y \mid A = 1, S = 1, W=w, X)] \\
  &\qquad + \frac{\one(A = 1, W = w, S = 0)\times p(w)}{\P(A = 1, W = w, S = 0)} \left[\E(Y \mid A = 1, S = 1, W = w, X) - h(1,w)\right]\\
  &\qquad + h(1,w)[\one(W= w) - p(w)]\Bigg\} \\
  &= \frac{\one(A = 1, S = 1)}{\P(A = 1 | S = 0, W) \P(S = 0 | W)} \times \frac{\P(S = 0 \mid X, A = 1, W)}{\P(S = 1 | X, A = 1, W)}[Y - \E(Y | A = 1, S = 1, W, X)] \\
  &\qquad + \frac{\one(A = 1, S=0)}{\P(A = 1 | S = 0, W) \P(S = 0 | W)} \times [\E(Y | A = 1, S = 1, W, X) - h(1,W)] \\
  &\qquad + h(1,W) - h_1.
\end{align*}

Given nuisances parameter $\kappa = (e_A,  e_S, q_S, m, p_X, h)$, we define $  \D_{h,1}(O; \kappa)$ as
\begin{align*}
    &\D_{h,1}(O; \kappa)\\ & = \frac{\one(A = 1, S = 1)}{\P(A = 1 | S = 0, W) \P(S = 0 | W)} \times \frac{\P(S = 0 \mid X, A = 1, W)}{\P(S = 1 | X, A = 1, W)}[Y - \E(Y | A = 1, S = 1, W, X)] \\
  &\qquad + \frac{\one(A = 1, S=0)}{\P(A = 1 | S = 0, W) \P(S = 0 | W)} \times [\E(Y | A = 1, S = 1, W, X) - h(1,W)] \\
  &\qquad + h(1,W) \\
  &=\frac{\one(A = 1, S = 1)}{e_A(W) \{1 - e_S(W)\}} \times \frac{1-q_S(1,W,X)}{q_S(1,W,X)}[Y - m(1,W,X)] \\
  &\qquad + \frac{\one(A = 1, S=0)}{e_A(W) \{1 - e_S(W)\}} \times [m(1,W,X) - h(1,W)] \\
  &\qquad + h(1,W). 
\end{align*}

Now we prove $\E[\D_{h,1}(O; \hat{\kappa}) \mid W  = w] = h(1,w) + R(\hat{\kappa}, \kappa)$, where $R(\hat{\kappa}, \kappa)$ is of the second order.
{\footnotesize
\begin{align}
  &\E \Big[\D_{h,1}(O; \hat{\kappa})  - h(1,W)  \big| W \Big] \notag \\
   &= \E\Bigg\{\frac{1 - \hat{q}_S(1,W,X)}{\hat{q}_S(1,W,X)} \frac{A}{\hat{e}_A(W)}\frac{S}{1 - \hat{e}_S(W)}\Big[m(1, W, X) - \hat{m}(1, W, X)\Big] \notag\\
   & \qquad + \frac{e_A(W)}{\hat{e}_A(W)} \frac{1-e_S(W)}{1 - \hat{e}_S(W)}\Big[\sum_x\hat{m}(1, W, X) [p_X(1,W) -  \hat{p}_X(1,W)]\Big] \notag\\
   & \qquad + \sum_x \hat{m}(1, W, X) \hat{p}_X(1,W) - \sum_x m(1, W, X) p_X(1,W) \bigg| W \Bigg\} \notag\\
   &= \E\Bigg\{\Big[\frac{1 - \hat{q}_S(1,W,X)}{\hat{q}_S(1,W,X)\hat{e}_A(W)(1 - \hat{e}_S(W))} - \frac{1 - q_S(1,W,X)}{q_S(1,W,X)e_A(W)(1 - e_S(W))}\Big] AS \Big[m(1, W, X) - \hat{m}(1, W, X) \Big] \bigg| W \Bigg\}\label{eq:r17} \\
   &\qquad + \E\Bigg\{\frac{1 - q_S(1,W,X)}{q_S(1,W,X) e_A(W) (1 - e_S(W))} A S [m(1, W, X) - \hat{m}(1, W, X) ] \bigg| W  \Bigg\} \notag  \\
   & \qquad  + \E\Bigg\{ \frac{e_A(W)}{\hat{e}_A(W)} \frac{1 - e_S(W)}{1-\hat{e}_S(W)} \Big[\sum_x \hat{m}(1, W, X) [p_X(1,W) - \hat{p}_X(1,W)]\Big] \bigg| W  \Bigg\} \notag \\
   & \qquad + \E\Bigg\{ \sum_x \hat{m}(1, W, X) \hat{p}_X(1,W) - \sum_x m(1, W, X) p_X(1,W) \bigg| W \Bigg\} \notag \\
   &= (\ref{eq:r17}) \notag \\
   & \qquad +  \sum_x \P(X=x \mid A=1, W, S=1)\,\frac{\P(S = 0 \mid X, A = 1, W)}{\P(S = 1 \mid X, A = 1 ,W)} \frac{\P(A = 1 \mid W, S = 1)}{\P(A = 1 \mid W, S = 0)} \frac{\P(S = 1 \mid W)}{\P(S = 0 \mid W)}  [m(1, W, X) - \hat{m}(1, W, X)]  \notag\\
   & \qquad  + \E\Bigg\{ \Big[\frac{e_A(W)}{\hat{e}_A(W)} \frac{1 - e_S(W)}{1-\hat{e}_S(W)}  - 1\Big] \Big[\sum_x \hat{m}(1, W, X) [p_X(1,W) - \hat{p}_X(1,W)]\Big] \bigg| W \Bigg\} \label{eq:r18}  \\ 
   & \qquad + \E\Bigg\{\sum_x \hat{m}(1, W, X) [p_X(1,W) - \hat{p}_X(1,W)] \bigg| W \Bigg\}  \notag \\
    & \qquad + \E\Bigg\{ \sum_x \hat{m}(1, W, X) \hat{p}_X(1,W) - \sum_x m(1, W, X) p_X(1,W) \bigg| W \Bigg\} \notag \\
    &= (\ref{eq:r17}) + (\ref{eq:r18} ) \notag \\
    & \qquad + \sum_x p_X [m(1, W, X) - \hat{m}(1, W, X)] +  \E\Bigg\{\sum_x \hat{m}(1, W, X) [p_X(1,W) - \hat{p}_X(1,W)] \bigg| W \Bigg\} \notag \\
    & \qquad + \E\Bigg\{ \sum_x \hat{m}(1, W, X) \hat{p}_X(1,W) - \sum_x m(1, W, X) p_X(1,W) \bigg| W \Bigg\} \notag \\
    &= (\ref{eq:r17}) + (\ref{eq:r18} ) \notag.
\end{align}
}

The influence function of $h_0 = \E[h(0,w)]$ follows similar steps: 
\begin{align*}
     &\mathbb{IF}\{h_0\}\\   &= \frac{\one(A = 0, S = 1)}{\P(A = 0 | S = 0, W) \P(S = 0 | W)} \times \frac{\P(S = 0 \mid X, A = 0, W)}{\P(S = 1 | X, A = 0, W)}[Y - \E(Y | A = 0, S = 1, W, X)] \\
  &\quad + \frac{\one(A = 0, S=0)}{\P(A = 0 | S = 0, W) \P(S = 0 | W)} \times [\E(Y | A = 0, S = 1, W, X) - h(0,W)] \\
  &\quad + h(0,W) - h_0\\
  &= \frac{\one(A = 0, S = 1)}{\{1 - e_A(W)\} \{1 - e_S(W)\}} \times \frac{1-q_S(0,W,X)}{q_S(0,W,X)}[Y - m(0,W,X)] \\
  &\qquad + \frac{\one(A = 0, S=0)}{\{1 - e_A(W)\} \{1 - e_S(W)\}} \times [m(0,W,X) - h(0,W)] \\
  &\qquad + h(0,W) - h_0 \\
  &= \D_{h,0}(O; \kappa) - h_0.
\end{align*}

The proof for
$\E[\D_{h,0}(O; \hat{\kappa}) \mid W  = w] = h(0,w) + R(\hat{\kappa}, \kappa)$, where $R(\hat{\kappa}, \kappa)$ is of the second order follows a similar manner. Combining, we obtain the doubly robust transformed CATE---for arbitrary $\kappa'$ we have $f(w) = \E[\D_{h}(O; \kappa') \mid W=w]$.

\end{proof}

\subsection{Covariate descriptions for ACHIEVE application analysis}
\label{subsec:ACHIEVE variables}

\textbf{Demographic.} Each participant self-reported their date of birth, sex, race, education, former occupation, and annual income. Date of birth was used to calculate age. The field site that recruited each participant was documented by staff.\\

\textbf{Genetic.} The Human Genetics Center at the University of Texas, Houston analyzed DNA extracted from participant blood samples \citep{blair2005apoe}. The TaqMan assay (Applied Biosystems, Foster City, CA) detected APOE variants at codons 130 and 176 and determined the presence of 0, 1, or 2 $\varepsilon4$ alleles. \\

\textbf{Hearing.} Objective hearing was quantified through audiometry performed in sound-attenuating rooms. Pure-tone air- and bone-conduction thresholds were assessed in each ear using a modified Hughson-Westlake \citep{hughson1944manual} psychophysical bracketing method \citep{carhart1959preferred}. Pure tone average was defined as the mean in the better-hearing ear across the frequencies 0.5, 1, 2, and 4 kHz. Communicative function was measured by the 10-item screening version of the Hearing Handicap Inventory for the Elderly \citep{tomioka2013hearing, ventry1983identification}. Loud noise exposure was quantified based on self-reported lifetime exposure to firearms, job-related loud noise for more than 10 hours per week, or very loud noise for more than 10 hours per week outside of a job \citep{kim2020association}. Speech in noise ability was measured by the Quick Speech in Noise (QuickSIN) test \citep{killion2004development}.\\

\textbf{Anthropometric.} Body weight was measured to the nearest 0.1 kilogram, and height was measured to the nearest centimeter. Body mass index (BMI) was calculated as weight in kilograms divided by height in meters squared. Waist circumference was measured to the nearest centimeter using the smallest circumference between the lower ribs and iliac crests and hip circumference was measured using the greatest circumference between the iliac crest and thighs. The ratio of the waist-to-hip circumference was computed. \\

\textbf{Cardiovascular.} Systolic blood pressure (SBP) and diastolic blood pressure (DBP) were measured using the Omron HEM-907 XL oscillometric automated sphygmomanometer (Omron Healthcare, Kyoto, Japan). Resting heart rate was calculated from a 2-minute supine 12‐lead electrocardiogram recording using standardized methods \citep{liao1996computer}.\\

\textbf{Self-reported Medical Conditions.} Hypertension was defined as SBP $\geq 140$ mmHg, DBP $\geq 90$ mmHg, use of anti-hypertensive medication, or self-reported physician diagnosis. Diabetes, coronary heart disease, myocardial infarction, stroke, and Parkinson’s disease were defined based on medication use or self-reported physician diagnosis. Traumatic brain injury and seizures were determined by self-reported physician diagnosis.
Lifestyle. Current, former, or never use of cigarettes or alcohol were ascertained by self-report. Leisure-time and sport-related physical activity were measured by the Baecke questionnaire \citep{baecke1982short}. \\

\textbf{Physical Function.} Lower extremity function was quantified from balance tests (tandem, semi-tandem, side-by-side), repeated chair stands, and a 4-meter walk. Each test was rated on a scale from 0 to 4 \citep{guralnik1994short}. Grip strength in kilograms of force in the dominant hand was measured using a Jamar Hydraulic Hand Dynamometer (Chicago, IL).\\

\textbf{Functional Status.} Functional limitations \citep{nagi1976epidemiology, rosow1966guttman} were measured by 5 items with a 4-point scale that determined whether the participant had difficulty walking a quarter mile, walking up 10 steps, bending their body (stooping, crouching, or kneeling), lifting or carrying, or standing up. Participants also self-reported whether they needed a walking aid, a special eating utensil, or devices to help dress themselves. Difficulty with instrumental activities of daily living \citep{lawton1969assessment} was measured by 3 items with a 4-point scale that ascertained whether the participant was able to do chores, prepare meals, or manage money on their own. A fourth item asked if the participant needed help with household chores, doing necessary business, shopping or getting around. Activities of daily living \citep{katz1963studies} were assessed by 4 items with a 4-point scale that asked about the participant’s ability to walk between rooms, get out of bed, eat, and dress themselves. Participants also self-reported whether they needed help with personal care. \\

\textbf{Mental Health.} Depressive symptomatology was measured using the 11-item Center for Epidemiologic Studies Depression scale \citep{radloff1977ces, kohout1993two} validated for older adults \citep{gellis2010assessment}. \\

\textbf{Cognition.} Cognition was assessed by the Mini Mental State Exam \citep{folstein1975mini} and a 10-test cognitive battery. The battery included the Digit Span Backwards \citep{wechsler1987wechsler}, Boston Naming Test \citep{williams1989boston}, Word Fluency Test \citep{benton1976multilingual}, Animal Naming Score \citep{benton1976multilingual}, Digit Symbol Substitution \citep{wechsler1987wechsler}, Trail Making Tests A and B \citep{reitan1958validity}, Incidental Learning \citep{ryan2001wais}, Logical Memory Test \citep{wechsler1987wechsler}, and the Delayed Word Recall \citep{knopman1989verbal}. Scores from the 10-test cognitive battery were used to compute a factor score of global cognition \citep{gross2015application}. Select tests were used to compute separate factor scores for predefined cognitive domains \citep{gross2015application, rawlings2016factor} of executive function (Digit Symbol Substitution, Trail Making Tests A and B), language (Boston Naming Test, Word Fluency Test, and Animal Naming Score), and memory (Incidental Learning, Logical Memory Test, Delayed Word Recall).

\putbib[refs]
\end{bibunit}

\end{document}